\numberwithin{equation}{section}
\newtheorem{theorem}{Theorem}[section]
\newtheorem{coro}[theorem]{Corollary}
\newtheorem{proposition}[theorem]{Proposition}
\theoremstyle{remark}
\newtheorem{remark}[theorem]{Remark}
\begin{document}

\title[]{On the superintegrability of the Gaussian $\beta$ ensemble \\ and its $(q,t)$ generalisation}

%\subjclass[2020{15B5}

\date{}

%%%%%%%%%%%%%%%%%%%%%%%%%%%%% author %%%%%%%%%%%%%%%%%%%%%%%%%%%%
\author{Sung-Soo Byun}
\address{Department of Mathematical Sciences and Research Institute of Mathematics, Seoul National University, Seoul 151-747, Republic of Korea}
\email{sungsoobyun@snu.ac.kr}

\author{Peter J. Forrester}
\address{School of Mathematical and Statistics, The University of Melbourne, Victoria 3010, Australia}
\email{pjforr@unimelb.edu.au} 
%%%%%%%%%%%%%%%%%%%%%%%%%%%%% author %%%%%%%%%%%%%%%%%%%%%%%%%%%%

\thanks{Sung-Soo Byun was supported by the National Research Foundation of Korea grant (RS-2023-00301976, RS-2025-00516909). Peter Forrester was supported by the Australian Research Council discovery project grant DP250102552.
}

%\dedicatory{}

%\keywords{ }

\begin{abstract}
In the present context, superintegrability is a property of certain probability density functions coming from matrix models, which relates to the average over a distinguished basis of symmetric functions, typically the Jack or Macdonald polynomials. It states that the average can be computed according a certain combination of those same polynomials, now specialised by specific substitutions when expressed in terms of the power sum basis. For a particular $(q,t)$ generalisation of the Gaussian $\beta$ ensemble from random matrix theory, known independently from the consideration of certain integrable gauge theories, we use results developed in a theory of multivariable Al-Salam and Carlitz polynomials based on Macdonald polynomials to prove the superintegrability identity. This then is used to deduce a duality formula for these same averages, which in turn allows for a derivation of a functional equation for the spectral moments.
\end{abstract}

\maketitle

\section{Introduction}

Familiar in random matrix theory is the eigenvalue probability density function (PDF) on $(x_1,x_2,\dots,x_N) \in\mathbb R^N$,
\begin{equation}\label{1.1a}
{1 \over Z_N^{\rm G} } \prod_{1 \le i < j \le N} | x_i - x_j|^\beta \prod_{l=1}^N e^{-\beta x_l^2/2},
\end{equation}
with $\beta > 0$ and where $Z_N^{\rm G}$ is the normalisation
\begin{equation}\label{1.1b}
Z_N^{\rm G} = \beta^{-N/2 - N \beta (N - 1)/4}
(2 \pi)^{N/2} \prod_{j=0}^{N-1} { \Gamma(1 + (j + 1) \beta/2) \over \Gamma(1 + \beta/2) }.
\end{equation}
This is said to define the Gaussian $\beta$ ensemble; see e.g.~\cite[Ch.~1]{Fo10}. An overview of settings in random matrix theory giving rise to (\ref{1.1a}) is given in \S \ref{S1.1}.

Let $\langle \cdot \rangle^{{\rm G\beta E}_N}$ denote an average with respect to (\ref{1.1a}), and let $P_\kappa^{(2/\beta)}(\mathbf x)$ with $\mathbf x = (x_1,\dots,x_N)$ denote the symmetric Jack polynomial; this class of special function is to be discussed further in \S \ref{S1.2}. For now we remark that the Jack symmetric polynomials are labelled by a partition $\kappa = (\kappa_1,\dots,\kappa_N)$, depend on $\beta$, and form a basis for analytic symmetric functions of $N$ variables.
The property of the Gaussian $\beta$ ensemble of most interest to us in the present work is the explicit formula for the $\langle \cdot \rangle^{{\rm G\beta E}_N}$ average over $P_\kappa^{(2/\beta)}$,
\begin{equation}\label{4.2}
2^{| \kappa |/2} \langle  P_\kappa^{(2/\beta)}(\mathbf x) \rangle^{{\rm G\beta E}_N} 
 =
P_\kappa^{(2/\beta)}(\mathbf x) \Big |_{\mathbf x = \mathbf 1}
{ [p_2^{|\kappa|/2}] P_\kappa^{(2/\beta)}(\mathbf x) \over [p_1^{|\kappa|}] P_\kappa^{(2/\beta)}(\mathbf x )} =
P_\kappa^{(2/\beta)}(\{ p_k = N \})
{ [p_2^{|\kappa|/2}] P_\kappa^{(2/\beta)}(\mathbf x) \over [p_1^{|\kappa|}] P_\kappa^{(2/\beta)}(\mathbf x )},
\end{equation}
where $|\kappa|$ is assumed even, or otherwise the left hand side vanishes;
references are deferred until the next paragraph.
Here $p_j := \sum_l x_l^j$ is the $j$-th power sum, $\mathbf x= \mathbf 1$ denotes that $x_j = 1$ for each $j=1,\dots,N$, the notation $[p_j^r] P_\kappa^{(2/\beta)}$ denotes the coefficient of $p_j^r$ in the power sum expansion of $P_\kappa^{(2/\beta)}$, and the notation 
$P_\kappa^{(2/\beta)}(\{ p_k = N \})$ denotes that each $p_k$ is to be replaced by $N$ in the same power sum expansion.

Results from Jack polynomial theory give
\begin{equation}\label{4.3}
{P_\kappa^{(2/\beta)}(\mathbf x) |_{\mathbf x = \mathbf 1}
\over [p_1^{|\kappa|}] P_\kappa^{(2/\beta)}(\mathbf x)} 
= (2/\beta)^{|\kappa|} [\beta N/2]_\kappa^{(2/\beta)},
\end{equation}
where $|\kappa|:= \sum_{j=1}^N \kappa_j$  and
\begin{equation}\label{4.3m}
\quad[u]_\kappa^{(\alpha)} := \prod_{l=1}^N {\Gamma(u - (l-1)/\alpha + \kappa_l) \over \Gamma(u - (l - 1)/\alpha)};
\end{equation}
see e.g.~\cite[Eqns.~(12.104), (12.105) and Prop.~12.6.7]{Fo10}). With (\ref{4.3m}) substituted in (\ref{4.3}), and this in turn substituted in the first equality of(\ref{4.2}), the resulting identity is
equivalent to a conjecture of Goulden and Jackson \cite{GJ97}, first proved by Okounkov \cite{Ok97} and subsequently, incorporating  different strategies, by Dumitriu \cite{Du03} and Desrosiers \cite{De09}.
In the special case $\beta = 2$, when the Jack polynomial is equal to the more familiar Schur polynomial, the identity in the structured form given in the second equality of (\ref{4.2}) --- involving only quantities on the right hand side that can be directly related to the same Jack polynomial as appearing on the 
left hand side when expressed in the power sum basis --- was derived in a work by Itzykson and Zuber \cite{IZ90}. 
There are now several known identities  exhibiting such a structured form --- see in particular (\ref{Mq}) below --- which in recent literature \cite{MM21,MM22,MM24} has been termed a superintegrability property. We remark that a proof of (\ref{4.2}) exhibiting directly the superintegrability form has been given recently in  \cite{WLZZ22,WZZZ22}; see also \S \ref{S2.1} below.

From various viewpoints there is interest in $q$ (or more generally $(q,t)$) generalising (\ref{1.1a}), and its analogue for unitary matrices, the circular $\beta$ ensemble. Thus one finds motivations from considerations of beta integrals \cite{As80,LSYF22}, Macdonald  polynomial theory \cite{Ko96,BF98,Ma07,Cu18,Ol21}, $q$-Virasoro constraints \cite{SKAO96,NZ17,CLPZ19,LPSZ20,CZ22,KK22}, the computation of moments and combinatorics \cite{Wi12,Ve14,MPS20,Co21,FLSY23,BFO24} (these latter references specialise to the case $\beta = 2$), 
inhomogeneous tiling-type models and Gaussian free field fluctuations
\cite{FR02b,DK19,GH22,LWYZ24},
and gauge theories \cite{AOSV05,ST13,Sh15,CLZ20}. Here our list of references, while comprehensive, is far from complete --- many more could be given. 
The first step in $q$ generalising (\ref{1.1a}) is to identify the appropriate replacement of the Gaussian weight. For this, we introduce a further parameter $a < 0$, and define 
 \begin{equation}\label{5.2c}
 w_U^{(a)}(x;q) := {(qx;q)_\infty (qx/a;q)_\infty \over (q;q)_\infty (a;q)_\infty (q/a;q)_\infty},
 \end{equation}
 known from the theory of the Al-Salam and Carlitz $q$-orthogonal polynomials \cite{KLS10} (in the case $a=-1$ (\ref{5.2c}) is the weight function for the discrete $q$-Hermite polynomials; 
 the notation used on the right hand side of (\ref{5.2c}) is given in \S \ref{S1.3}, along with justifying it as a
 generalisation of the Gaussian weight.
Next, with $m \in \mathbb Z^+$, $t:= q^m$, following \cite{BF98} we appropriately generalise the  product of differences in (\ref{1.1a}), and introduce as a $(q,t)$ analogue   the functional form
 \begin{equation}\label{5.3}
 {1 \over \mathcal N_{0,N}(a;q,t)} \prod_{l=1}^N  w_U^{(a)}(x_l;q) \prod_{p=-(m-1)}^m \prod_{1 \le i < j \le N} (x_i - q^p x_j),
  \end{equation}
  supported on the $q$-lattice $\{1,q,q^2,\dots \} \cup \{a,aq,aq^2,\dots \}$ for each $x_l$. With the requirement that $a<0$ this can be checked to be non-negative. While it is not symmetric in $\{x_1,\dots,x_N\}$, when averaged against symmetric functions it can be replaced by a modification,
  namely (\ref{5.3x}) below, which is symmetric. The normalisation has the evaluation \cite[Eq.~(4.27)]{BF98}
  \begin{equation}\label{5.3a}
 \mathcal N_{0,N}(a;q,t) =  (1-q)^N (-a)^{m N (N - 1)/2}
 t^{m \binom{N}{3} - \frac{m-1}{2}  \binom{N}{2}}
 \prod_{l=1}^N {(q;q)_{ml} \over (q;q)_m};
 \end{equation}
  cf.~(\ref{1.1b}).

Let $\langle \cdot \rangle^{(a,q,t)}_N$ denote an average with respect to (\ref{5.3}). As a $(q,t)$ generalisation of the Jack polynomials introduce the Macdonald polynomials $P_\kappa(\mathbf x; q,t)$ \cite{Ma95} --- for more on these see \S \ref{S1.3} below. Then as a   $(q,t)$ generalisation of (\ref{4.3}) the recent works \cite{MPS18},
   \cite{MPS20} and \cite{MM22} (after minor modification to match our setting) give as a conjecture the following superintegrability identity (characteristic feature being that the average of the polynomial is equal to an expression involving the same polynomial in the power sum basis), which we are able to prove. 

\begin{theorem}\label{T1}
We have
\begin{equation}\label{Mq}
 \langle  P_\kappa(\mathbf x;q,t) \rangle^{(a,q,t)}_N 
 =
P_\kappa\Big ( \Big \{ p_k = {1 - t^{kN} \over 1 - t^k}
\Big \}; q,t \Big )
{  P_\kappa \Big ( \Big \{ p_k = {1 + a^k  \over 1 - t^k}
\Big \};q,t \Big ) \over P_\kappa \Big ( \Big \{ p_k = {1 \over 1 - t^k}
\Big \}; q,t \Big )},
\end{equation}
where on the right hand side the Macdonald polynomials are viewed as being specified in the power sum basis, with each $p_k$ therein then substituted as specified.
\end{theorem}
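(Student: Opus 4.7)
The plan is to compute the average $\langle P_\kappa\rangle^{(a,q,t)}_N$ by exploiting a multivariable generalisation of the one-variable Al-Salam–Carlitz polynomials orthogonal with respect to $w_U^{(a)}(\,\cdot\,;q)$, now taken with respect to (the symmetrised form of) the measure (\ref{5.3}). I would work with the family of symmetric polynomials $\{U_\kappa^{(a)}(\mathbf x;q,t)\}$, indexed by partitions, characterised by triangularity in the dominance order with respect to the Macdonald basis,
\[
U_\kappa^{(a)}(\mathbf x;q,t) = P_\kappa(\mathbf x;q,t) + \sum_{\mu\prec\kappa}c_{\kappa\mu}(a;q,t)\,P_\mu(\mathbf x;q,t),
\]
together with orthogonality with respect to the averaging measure. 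Since $U_\emptyset^{(a)} = 1$ and the measure is a probability measure, orthogonality yields $\langle U_\kappa^{(a)}\rangle^{(a,q,t)}_N = 0$ for $\kappa\neq\emptyset$. Inverting the triangular system to write $P_\kappa(\mathbf x;q,t) = \sum_{\mu\le\kappa}d_{\kappa\mu}(a;q,t)\,U_\mu^{(a)}(\mathbf x;q,t)$ reduces the problem to extracting the single coefficient $d_{\kappa\emptyset}(a;q,t)$, which by construction equals $\langle P_\kappa\rangle^{(a,q,t)}_N$.

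The second step is to invoke from the multivariable Al-Salam–Carlitz literature the explicit closed-form expansion of $U_\kappa^{(a)}$ in the Macdonald basis — the multivariable analogue of the one-variable formula $U_n^{(a)}(x;q) = \sum_{k=0}^n \binom{n}{k}_q(-a)^{n-k}q^{\binom{k}{2}}x^k$ — in which the $c_{\kappa\mu}$ are $(-a)^{|\kappa|-|\mu|}$ times skew-Macdonald polynomials evaluated at principal specialisations. Inverting the triangular system gives an explicit sum-over-sub-partitions formula for $d_{\kappa\emptyset}(a;q,t)$, which I would then collapse using a Cauchy-type identity for Macdonald polynomials specialised at the union of geometric progressions $\{q^i\}_{i\ge0}\cup\{aq^i\}_{i\ge0}$; this is precisely the alphabet whose power-sum specialisation is $p_k \mapsto (1+a^k)/(1-q^k)$, which after the usual $(q,t)$ exchange carried by the Macdonald inner product becomes the specialisation $p_k\mapsto(1+a^k)/(1-t^k)$ appearing on the right-hand side of (\ref{Mq}).

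To identify the resulting expression with the right-hand side of (\ref{Mq}), I would then apply three standard Macdonald specialisation identities: the principal specialisation $P_\kappa(1,t,\dots,t^{N-1};q,t)$, which produces the factor $P_\kappa(\{p_k=(1-t^{kN})/(1-t^k)\};q,t)$ carrying the entire $N$-dependence; the infinite-alphabet denominator specialisation $p_k\mapsto1/(1-t^k)$; and the $a$-shifted numerator $p_k\mapsto(1+a^k)/(1-t^k)$. The structured ratio form on the right-hand side of (\ref{Mq}) then arises naturally: the principal specialisation reflects the range of the orthogonality measure (determined by $N$), while the ratio of the doubled- and single-geometric-alphabet specialisations reflects insertion of the Al-Salam–Carlitz potential $w_U^{(a)}$. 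The normalisation (\ref{5.3a}) appears only as a routine cross-check.

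The main obstacle, where genuine work will be concentrated, is the closed-form summation that identifies $d_{\kappa\emptyset}$ with the compact ratio of specialised Macdonald polynomials on the right-hand side of (\ref{Mq}): a priori the inversion produces a sum over sub-partitions rather than a single quotient, and transforming the former into the latter requires either a dual Cauchy manipulation specialised to the Al-Salam–Carlitz alphabet $\{q^i\}\cup\{aq^i\}$, or an application of a $q$-Macdonald–Selberg style integral identity at this weight. Once this bridging identity is in hand, Theorem \ref{T1} follows directly.
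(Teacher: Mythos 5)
Your opening reduction is sound: the multivariable Al--Salam--Carlitz polynomials $U^{(a)}_\kappa$ of \cite{BF98} are indeed orthogonal with respect to the (symmetrised) measure, $U^{(a)}_\emptyset=1$, so $\langle P_\kappa\rangle^{(a,q,t)}_N=d_{\kappa\emptyset}$ where $P_\kappa=\sum_{\mu}d_{\kappa\mu}U^{(a)}_\mu$. This is precisely the $(q,t)$ analogue of the generalised-Hermite route recorded in Remark \ref{R3x}, where the same reduction gives (\ref{3.9a}). But the paper is explicit there that this route by itself yields only a \emph{recursive} computation scheme for the constant coefficient; the entire content of the superintegrability identity is the closed-form evaluation of that coefficient, and this is exactly the step your proposal defers. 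Concretely, there are two gaps. First, the asserted expansion of $U^{(a)}_\kappa$ in the Macdonald basis with coefficients ``$(-a)^{|\kappa|-|\mu|}$ times skew-Macdonald polynomials at principal specialisations'' is not established (nor, to my knowledge, available in the literature in that form; in \cite{BF98} the connection coefficients arise from the exponential of a lowering operator and do not visibly factor this way). Second, even granting it, inverting the triangular system produces an alternating sum over chains of sub-partitions, not a single sum, and your proposed collapse via a Cauchy identity at the alphabet $\{q^i\}\cup\{aq^i\}$ is only sketched; moreover the mechanism you invoke for trading $1/(1-q^k)$ for $1/(1-t^k)$ --- ``the $(q,t)$ exchange carried by the Macdonald inner product'' --- has no natural entry point in your setup, since your argument is a connection-coefficient computation in which the scalar product (\ref{JS1q}) never appears.

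For contrast, the paper sidesteps the triangular inversion entirely: it takes the kernel identity (\ref{8.3}), $\langle\,{}_0\mathcal F_0(\mathbf x,\mathbf y;q,t)\rangle^{(a,q,t)}_N=\prod_l(E_q(-y_l)E_q(-ay_l))^{-1}$, expands the right-hand side as $\prod_r\exp((1+a^r)p_r(\mathbf y)/((1-q^r)r))$, and pairs both sides against $P_\kappa(\mathbf y;q,t)$ with the scalar product (\ref{JS1q})--(\ref{JS2q}); the factor $(1-q^{\kappa_l})/(1-t^{\kappa_l})$ in the power-sum norm is exactly what converts the specialisation $p_k\mapsto(1+a^k)/(1-q^k)$ into $p_k\mapsto(1+a^k)/(1-t^k)$, and (\ref{3.17}) supplies the denominator. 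Since (\ref{8.3}) is essentially the generating function for the full family $\{U^{(a)}_\kappa\}$, completing your plan would in effect require re-deriving this identity or an equivalent; as written, the proposal identifies the right objects but stops short of the step that constitutes the proof.
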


Our proof is given in \S \ref{S3.2s}. There we show too how to give meaning to $\langle \cdot \rangle^{(a,q,t)}_N$ beyond the case $t = q^m$, $m \in \mathbb Z^+$, when the underlying probability density function is specified by (\ref{5.3});
see Remark \ref{RE1}. Moreover, a structural feature of the right hand side of (\ref{Mq})
beyond that already highlighted as identifying it as a superintegrability identity,  is that it is a rational function in $(a,q,t,t^N)$, and so is well defined independent of any assumed relation between $t$ and $q$.
% We will see later in (\ref{4.3q}) below that on the right hand side of (\ref{T1}) the ratio of the first factor and the denominator can be further simplified analogous to \eqref{4.3}. However, we will see for the purposes of several corollaries and applications that there can be an advantage in keeping (\ref{Mq}) as is.

We next proceed to consider some applications. 
The first is a duality formula for Macdonald polynomial averages, which generalises a known duality for the Gaussian $\beta$ ensemble (see \S 3.1).

\begin{coro}\label{C1q}
Let $\kappa$ be a partition, and denote by $\ell(\kappa)$ the number of nonzero parts in $\kappa$. Let $\kappa'$ denote the partition conjugate to $\kappa$ (see text below (\ref{h1})).
For $\ell(\kappa) \le N$ and $\ell(\kappa') \le M$ we have
\begin{equation}\label{GFq}
\Bigg \langle {P_\kappa(\mathbf x;q,t) \over 
P_\kappa(1,t,\dots,t^{N-1};q,t)} 
\Bigg \rangle^{(a,q,t)}_N
=
\Bigg \langle {P_{\kappa'}(\mathbf x;t^{-1},q^{-1}) \over 
P_{\kappa'}(1,q^{-1},\dots,q^{-M+1};t^{-1},q^{-1})} 
\Bigg \rangle^{(a,t^{-1},q^{-1})}_M.
\end{equation}
\end{coro}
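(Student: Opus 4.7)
The plan is to apply Theorem \ref{T1} to reduce both sides of (\ref{GFq}) to explicit ratios of Macdonald polynomials at power-sum specializations, and then verify the reduced identity using two structural properties of Macdonald polynomials: the $\omega_{q,t}$-automorphism of \cite{Ma95}, together with the parameter invariance $P_\lambda(x;q,t)=P_\lambda(x;q^{-1},t^{-1})$.

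First, using the principal specialization $P_\kappa(1,t,\ldots,t^{N-1};q,t)=P_\kappa(\{p_k=(1-t^{kN})/(1-t^k)\};q,t)$ to divide it out of the right-hand side of (\ref{Mq}), the left-hand side of (\ref{GFq}) reduces to
\begin{equation*}
\frac{P_\kappa(\{p_k=(1+a^k)/(1-t^k)\};q,t)}{P_\kappa(\{p_k=1/(1-t^k)\};q,t)},
\end{equation*}
while the right-hand side reduces to the same ratio with $(q,t,\kappa)$ replaced by $(t^{-1},q^{-1},\kappa')$. The $N$- and $M$-dependencies cancel, so it suffices to prove this as an identity of rational functions in $(a,q,t)$.

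For the LHS I would invert Macdonald's identity $\omega_{q,t}P_\lambda(\cdot;q,t)=b_{\lambda'}(t,q)P_{\lambda'}(\cdot;t,q)$, where $\omega_{q,t}(p_r)=(-1)^{r-1}(1-q^r)(1-t^r)^{-1}p_r$, to obtain
\begin{equation*}
P_\lambda(\{p_k=\alpha_k\};q,t)=b_{\lambda'}(t,q)\,P_{\lambda'}(\{p_k=(-1)^{k-1}(1-t^k)(1-q^k)^{-1}\alpha_k\};t,q).
\end{equation*}
Applied to both numerator and denominator, the factor $b_{\kappa'}(t,q)$ cancels and the $(1-t^k)$ term cancels the $(1-t^k)$ in the denominators of the substitutions, yielding
\begin{equation*}
\text{LHS}=\frac{P_{\kappa'}(\{p_k=(-1)^{k-1}(1+a^k)/(1-q^k)\};t,q)}{P_{\kappa'}(\{p_k=(-1)^{k-1}/(1-q^k)\};t,q)}.
\end{equation*}
For the RHS I would invoke the symmetry $P_\lambda(x;q,t)=P_\lambda(x;q^{-1},t^{-1})$ (which follows because $\langle p_\lambda,p_\mu\rangle_{q,t}$ rescales by an overall factor $(t/q)^{|\lambda|}$ on homogeneous pieces under $(q,t)\mapsto(q^{-1},t^{-1})$, preserving the defining orthogonality and triangularity of $P_\lambda$), combined with $1/(1-q^{-k})=-q^k/(1-q^k)$, to rewrite the RHS as
\begin{equation*}
\text{RHS}=\frac{P_{\kappa'}(\{p_k=-q^k(1+a^k)/(1-q^k)\};t,q)}{P_{\kappa'}(\{p_k=-q^k/(1-q^k)\};t,q)}.
\end{equation*}

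To close the argument I would use the factorization $-q^k=(-1)^{k-1}(-q)^k$: the substitution $p_k\mapsto(-q)^kp_k$ is the algebra endomorphism induced by the variable scaling $x\mapsto-qx$, which multiplies any symmetric function homogeneous of degree $|\kappa|=|\kappa'|$ by the scalar $(-q)^{|\kappa|}$. Thus both the RHS numerator and denominator acquire the common factor $(-q)^{|\kappa|}$ relative to the LHS numerator and denominator, the factors cancel in the ratio, and the two sides coincide. I anticipate the main obstacle to be the careful sign bookkeeping in the combined application of Macdonald's involution and the $(q,t)\leftrightarrow(q^{-1},t^{-1})$ symmetry; in particular, although $P_\lambda(\cdot;q,t)=P_\lambda(\cdot;q^{-1},t^{-1})$ is a direct consequence of the defining conditions, it is less prominently featured in the standard references than the $\omega_{q,t}$-automorphism and may warrant an explicit verification.
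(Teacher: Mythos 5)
Your argument is correct and follows essentially the same route as the paper's proof: both reduce the two sides via Theorem \ref{T1} to ratios of power-sum--specialised Macdonald polynomials and then match them using the Macdonald automorphism $\omega_{q,t}$, the invariance $P_\mu(\cdot;q,t)=P_\mu(\cdot;q^{-1},t^{-1})$, and the rescaling property $P_\mu(\{c^kp_k\})=c^{|\mu|}P_\mu(\{p_k\})$ (the paper's (\ref{MA1})--(\ref{4.6a})). The only difference is bookkeeping --- you normalise both sides to $(t,q)$ parameters and cancel an overall $(-q)^{|\kappa|}$ at the end, whereas the paper chains these steps into the single identity (\ref{MA3}) --- and your sign conventions check out.
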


The second of our applications of (\ref{dd1}), which more directly can be considered as a consequence of Corollary \ref{C1q}, is the derivation of a
certain functional equation of the type referred to in 
\cite[\S 2.6.1]{CZ22}  as a Langland's duality, for the moments
 \begin{equation}\label{mbq}
m_{N,p}^{(a,q,t)}= M_p(a,q,t,t^N) := 
\Big \langle \sum_{l=1}^N x_l^{p} \Big \rangle^{(a,q,t)}_N.
\end{equation}
On this we are guided by the known functional equation for the moments of the Gaussian $\beta$ ensemble (see \S \ref{S4.3}).

\begin{coro}\label{P5}
The moments (\ref{mbq}) satisfy the functional relation
 \begin{equation}\label{mbq1} 
 M_{p}(a,q,t, u) = 
 -  q^{-p}  \bigg ( {1 - q^p \over 1 - t^p} \bigg ) M_{p}(a,t^{-1},q^{-1},u),
 \end{equation}
 where $u := t^N$.
\end{coro}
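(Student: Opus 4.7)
The plan is to derive the functional equation (\ref{mbq1}) by combining two applications of Corollary \ref{C1q}. Since Theorem \ref{T1} ensures that $M_p(a, q, t, u)$ is a rational function of its arguments, the functional equation reduces to an algebraic identity, which one can verify by tracking the power-sum components of the relevant Macdonald polynomial averages.

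First, I would take $\kappa = (p)$ in (\ref{GFq}). Since the left-hand side has no $M$-dependence and the right-hand side has no $N$-dependence, both sides equal a common quantity $G(a; q, t, p)$ depending only on $a$, $q$, $t$, $p$. Evaluating at $N = 1$ (where $P_{(p)}(x_1; q, t) = x_1^p$) identifies $G(a; q, t, p) = M_p(a, q, t, t)$, so that for arbitrary $N$,
\begin{equation*}
\langle P_{(p)}(\mathbf{x}; q, t) \rangle^{(a, q, t)}_N = P_{(p)}(1, t, \dots, t^{N-1}; q, t) \cdot M_p(a, q, t, t).
\end{equation*}
Applying (\ref{GFq}) analogously with $\kappa = (1^p)$ (so $P_{(1^p)} = e_p$ and the dual polynomial is $P_{(p)}(\mathbf{x}; t^{-1}, q^{-1})$) yields
\begin{equation*}
\langle e_p(\mathbf{x}) \rangle^{(a, q, t)}_N = e_p(1, t, \dots, t^{N-1}) \cdot M_p(a, t^{-1}, q^{-1}, q^{-1}).
\end{equation*}

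Next, I would use the power-sum expansions of $P_{(p)}(\mathbf{x}; q, t)$ and of $e_p(\mathbf{x})$ (via Newton's identities) to isolate the $p_p$-contribution in both identities. The coefficient of $p_p$ in $P_{(p)}(\mathbf{x}; q, t)$ is a specific rational function of $(q, t)$ involving the factor $(1-t^p)/(1-q^p)$, while the coefficient of $p_p$ in $e_p$ is $(-1)^{p-1}/p$. Comparing the two formulas above, exchanging $(q, t) \leftrightarrow (t^{-1}, q^{-1})$ in the second to relate it to the dual ensemble, and tracking the transformation of the principal specializations $P_{(p)}(1, t, \dots, t^{N-1}; q, t)$ and $e_p(1, t, \dots, t^{N-1})$, should produce the relation (\ref{mbq1}) with the stated prefactor $-q^{-p}(1-q^p)/(1-t^p)$ emerging from the combination of these coefficient ratios.

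The main obstacle is handling the contributions from $\langle p_\lambda \rangle$ with $\ell(\lambda) \ge 2$ in the power-sum expansions, which are multi-point correlation functions rather than the individual moment $M_p$. To complete the argument, one needs to show that these correlation contributions transform consistently under $(q, t) \to (t^{-1}, q^{-1})$, or equivalently that they cancel upon combining the two identities above so that only the $M_p$-terms survive --- it is this cancellation that produces the clean prefactor in (\ref{mbq1}).
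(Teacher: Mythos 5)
Your two opening identities are correct: applying Corollary \ref{C1q} with $\kappa=(p)$ and evaluating the ($N$- and $M$-independent) common value at $N=1$ does give
$\langle P_{(p)}(\mathbf x;q,t)\rangle^{(a,q,t)}_N = P_{(p)}(1,t,\dots,t^{N-1};q,t)\, M_p(a,q,t,t)$, and similarly for $\kappa=(1^p)$. However, the argument has a genuine gap, and you have in fact identified it yourself in your final paragraph without resolving it. The moment $M_p(a,q,t,t^N)=\langle p_p\rangle^{(a,q,t)}_N$ is not determined by $\langle P_{(p)}\rangle$ and $\langle e_p\rangle$ alone once $p\ge 3$: expanding $p_p$ in the Macdonald basis (equation (\ref{3.5M}) of the paper) involves \emph{every} partition of $p$, e.g.\ $(2,1)$ for $p=3$. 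Equivalently, in your formulation, the power-sum expansions of $P_{(p)}$ and $e_p$ produce mixed averages $\langle p_\lambda\rangle$ with $\ell(\lambda)\ge 2$, which are independent unknowns; there is no mechanism by which they cancel between just two linear relations, and the hoped-for cancellation is precisely the content that is missing. A further problem is the $u$-dependence: your identities evaluate $M_p$ only at the special points $u=t$ and $u=q^{-1}$, whereas (\ref{mbq1}) asserts a relation at the same general $u=t^N$ on both sides; nothing in your scheme recovers the general-$u$ statement.

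The paper's proof avoids both issues by working with the full expansion $\sum_l x_l^p=\sum_{|\kappa|=p}u_\kappa(q,t)P_\kappa(\mathbf x;q,t)$ and applying the duality (\ref{GFq}) to \emph{every} term. The key computation, which your proposal has no analogue of, is the transformation law (\ref{3.6M})--(\ref{3.6p}) for the coefficients: under $(q,t)\mapsto(t^{-1},q^{-1})$ combined with conjugation $\kappa\mapsto\kappa'$, each ratio $u_\kappa(q,t)/h_\kappa(q,t)$ picks up the \emph{same} prefactor $-q^{-p}(1-q^p)/(1-t^p)$ (up to powers of $q$ and $t$ absorbed by (\ref{h1p}) and (\ref{MA3})), while the factor carrying the $u=t^N$ dependence, $P_\kappa(\{p_k=(1-t^{kN})/(1-t^k)\};q,t)$, maps to its dual counterpart at the same $u$ via (\ref{MA3}) with $a^k=-t^{Nk}$. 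The sum over $\kappa$ then reassembles, after relabelling $\kappa'\to\kappa$, into $M_p(a,t^{-1},q^{-1},u)$. If you wish to salvage your approach, you would need to apply the duality to all $P_\kappa$ with $|\kappa|=p$ and prove this uniform transformation of the coefficients; restricting to the two extreme partitions only suffices for $p\le 2$.
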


We conclude this Introduction with an outline of the remainder of the paper. In \S \ref{S1.1} we give some context to the Gaussian $\beta$ ensemble in random matrix theory.  In \S \ref{S1.2} the moments of the Gaussian $\beta$ ensemble are introduced as the power sum average, with motivations coming from the viewpoint of a certain topological expansion, and also that of the spectral density. In this section Jack polynomials are introduced, ready for their appearance in the averages of \S \ref{S2.1}. Section \ref{S1.3} gives motivation for the introduction of (\ref{5.3}) as the $(q,t)$ analogue of (\ref{1.1a}), and furthermore introduces Macdonald polynomials as the $(q,t)$ generalisation of the Jack polynomials. In \S \ref{S2.1}
the generalised hypergeometric function  ${\vphantom{\mathcal F}}_0^{\mathstrut}\mathcal F_0^{(\alpha)}(\mathbf x; \mathbf y)$ is introduced, and following earlier literature is used to give a derivation of  the Gaussian $\beta$ ensemble/Jack polynomial analogue of Theorem \ref{T1}. It is shown how the key steps therein can be $(q,t)$ generalised in \S \ref{S3.2s}, allowing for a 
proof of Theorem \ref{T1}.
Proofs of Corollaries \ref{C1q} and \ref{P5} are given in \S \ref{S3.1b}
and \S \ref{S4.4} respectively, with the preceding subsections containing proofs of their known analogues for the Gaussian $\beta$ ensemble, which we use for guidance. Also given in \S \ref{S3.1b}, as an application of 
Corollary \ref{C1q},
is the evaluation of
$\langle \prod_{l=1}^N (u - x_l) \rangle^{(a,q,t)}_N$,
which has the interpretation as the averaged characteristic polynomial.
In Remark \ref{R4.3} of this section we discuss challenges associated with using the duality formula for the purpose of computing large $N$ asymptotics. Additional material of 
\S \ref{S4.4} includes a
listing of the explicit forms of the moments $M_p(a,q,t,u)$ for $p=1,2,3$.

\section{Preliminaries relating to the ensembles and the superintegrability identities}
\subsection{Settings of the Gaussian $\beta$ ensemble}\label{S1.1}

In random matrix theory (\ref{1.1a}) is recognised as the eigenvalue PDF of Hermitian Gaussian random matrices $G$ chosen with  matrix distribution proportional to $e^{-\beta {\rm Tr} \, G^2/2}$, and with elements that are real $(\beta = 1)$, complex $(\beta = 2)$ or real quaternion $(\beta = 4)$; see e.g.~\cite[Ch.~1]{Fo10}. 
Such random matrices form the Gaussian orthogonal, unitary and symplectic random matrix ensembles respectively. Wigner \cite{Wi57a} (reprinted in \cite{Po65}) observed that up to the normalisation constant, the case $\beta = 1$ of (\ref{1.1a}) is identical to the Boltzmann factor $e^{-\beta W}$ of the classical gas on a line with potential
\begin{equation}\label{2.1b}
W = {1 \over 2} \sum_{l=1}^N x_l^2 - \sum_{1 \le j < k \le N}
\log | x_j - x_k |,
\end{equation}
and with $\beta$ interpreted as the dimensionless inverse temperature. This gives a model for the eigenvalues with PDF (\ref{1.1a}) and general $\beta > 0$ as a so-called log-gas, which was a viewpoint particularly prevalent in the work subsequent to Wigner by Dyson \cite{Dy62e}.

A realisation of (\ref{1.1a}) for general $\beta > 0$ as an eigenvalue PDF is given by a class of real symmetric tridiagonal matrices \cite{DE02}.
These tridiagonal matrices are specified by having independent normals on their diagonal, and the sequence of independent but non-identical square rooted gamma random variables with
shape parameter $\{(N-k) \}_{k=1,\dots,N-1}$ on their (independent) off diagonal. Also, a construction of a recursively defined random matrix ensemble which realises (\ref{1.1a}) for general $\beta > 0$ has been given in \cite{Fo13}.

In \cite{Dy62b} Dyson let the independent entries of the classical Gaussian Hermitian models evolve according to Brownian motion, and deduced that the corresponding eigenvalue PDF,
$p_\tau$ say, satisfies the Fokker-Planck equation
\begin{equation}\label{2.1a}
{\partial p_\tau \over \partial \tau} = \mathcal L p_\tau, \qquad  \mathcal L = {1 \over \beta} \sum_{j=1}^N {\partial^2 \over \partial x_j^2} + \sum_{j=1}^N
{\partial \over \partial x_j} {\partial W \over \partial x_j}.
\end{equation}
This has the equilibrium, $\tau \to \infty$, solution (\ref{1.1a}) for all $\beta > 0$.
Sutherland  \cite{Su71} observed that conjugation by $e^{-\beta W/2}$ transforms (\ref{2.1a}) to the Schr\"odinger operator for the Calogero-Sutherland many body system specified by   \cite{Ca69}
\begin{equation}\label{2.0}
\mathcal H = - \sum_{j=1}^N {\partial^2 \over \partial x_j^2}
+ {\beta^2 \over 4} \sum_{j=1}^N x_j^2 + \beta (\beta/2 - 1
) \sum_{1 \le j < k \le N} {1 \over (x_j - x_k)^2}.
\end{equation}
Thus (\ref{1.1a}) also has the interpretation as the absolute valued squared of the ground state for the quantum many body system.

% Thus let $X$ be a member of one of the so defined parameter independent ensembles. Set 
% $G = |1 - e^{-2 \tau} |^{1/2} X + 
% e^{-\tau} X_0$, where $X_0$ is a fixed Hermitian matrix with elements from the same number field as $X$. Then the corresponding eigenvalue PDF can be shown to satisfy 
% (\ref{2.1a}) \cite{Dy62b}, \cite[Ch.~11]{Fo10}. The equilibrium, $\tau \to \infty$ solution of (\ref{2.1a}) for all $\beta > 0$ is (\ref{1.1a}). It is furthermore true that $p_\tau$ itself can be realised as an eigenvalue probability density of a recursively defined random matrix ensemble for general $\beta > 0$ \cite{Fo13}. 

\subsection{Preliminaries on Gaussian 
$\beta$ ensemble averages  and Jack polynomials}\label{S1.2}

A celebrated application of the $\beta =2$ complex Hermitian  random matrix ensemble relating to (\ref{1.1a}) (i.e.~the Gaussian unitary ensemble) is to the enumeration of certain maps on Riemann surfaces \cite{BIPZ78}. Using the ideas of this work, the question of counting the number of pairings $\mathcal E_g(p)$ of the sides of a $2p$-gon giving rise to a surface of genus $g$ was addressed in \cite{HZ86}. With $\langle \cdot \rangle^{\rm GUE}$ denoting an average with respect to GUE random matrices, and $m_{N,2p}^{\rm GUE} :=
\langle {\rm Tr} \, G^{2p} \rangle^{\rm GUE}$, it was shown that 
\begin{equation}\label{3.1}
2^{p} m_{N,2p}^{\rm GUE} = \sum_{g=0}^{ \lfloor (p+1)/2 \rfloor}
\mathcal E_g(p) N^{p+1-2g}.
\end{equation}
This is said to define a topological expansion.
From (\ref{3.1}), various characterisations of $\{ \mathcal E_g(p) \}$ were given, including that of the solution of a certain two parameter recurrence relation. Most recently a finite summation formula involving $2g+1$ terms, and with the Stirling numbers of the first kind as a factor in the summand, has been given for $\mathcal E_g(p)$ \cite[Eq.(1.27)]{BFO24}.

There is a relation between $m_{N,2p}^{\rm GUE}$ and the $\beta =2$ case of the superintegrability identity (\ref{4.2}). As previously remarked, the Jack polynomials in this case reduce to the Schur polynomials, $s_\kappa(\mathbf x)$ say. Particular Schur polynomials can be used to compute the power sums according to \cite{Ma95}
\begin{equation}\label{3.5}
\sum_{l=1}^N x_l^j = \sum_{r=0}^{{\rm min} \{j-1,N-1\} }
(-1)^r s_{(j-r,(1)^r)}(\mathbf x).
\end{equation}
Here $(1)^r$ denotes that the part 1 is repeated $r$ times.
It is shown in 
\cite{IZ90} that use of (\ref{4.2}) in (\ref{3.5}) implies the alternating sum formula
\begin{align}\label{3.8}
{2^{p} \over (2p-1)!!} m_{N,2p}^{\rm GUE} & =
\sum_{s=0}^{p-1} (-1)^s \binom{p-1}{s} 
\bigg \{  \binom{N+2p-2s-1}{2p} +
\binom{N+2p-2s-2}{2p} \bigg \} \nonumber \\
& = {1 \over 2} [y^{p+1}] \bigg ( {(1 + y)^N \over (1 - y)^N} \bigg ),
\end{align}
where the second equality reclaims a result first  given in \cite[\S 4]{HZ86} in the context of the topological expansion (\ref{3.1}).

There is an independent reason to be interested in $m_{N,2p}^{\rm GUE}$, which in fact extends to the general $\beta > 0$ analogue of this average. Thus upon noting that
in terms of the eigenvalues, ${\rm Tr} \, G^{2p} = \sum_{l=1}^N x_l^{2p}$, one sees that
\begin{equation}\label{3.2}
m_{N,2p}^{\rm GUE} = \int_{\mathbb R} x^{2p} \rho_N^{\rm GUE}(x) \, dx,
\end{equation}
where $\rho_N^{\rm GUE}(x)$ is the spectral density for the GUE. Hence the $m_{N,2p}^{\rm GUE}$ can be viewed as spectral moments. In fact as noted and made use of by Wigner \cite{Wi58}, knowledge of the leading term in (\ref{3.1}) being given by $\mathcal E_0(p) = C_p$, where $C_p := {1 \over p + 1} \binom{2p}{p}$ is the $p$-th Catalan
number, is sufficient to compute the global scaled density.
This is defined as the limiting eigenvalue density of ${1 \over \sqrt{N}} G$ for $G$ a GUE matrix. 
Thus one has
\begin{equation}\label{3.3}
\lim_{N \to \infty} {1 \over \sqrt{N}} \rho_N^{\rm GUE}( \sqrt{N}x) = {\sqrt{4 - x^2} \over 2 \pi} \mathbbm{1}_{(-2,2)}(x),
\end{equation}
where the functional form in this expression is known as the Wigner semi-circle law \cite{AGZ09,PS11}.

We now present the definition of the Jack polynomial as appearing in the superintegrability identity (\ref{4.2}), and some properties which are required in our subsequent working.
While the Schur polynomials permit a determinantal
expression (see e.g.~\cite[Eq.~(10.16)]{Fo10})
% \begin{equation}\label{3.4}
% s_\kappa(\mathbf x) = {\det [ x_k^{N-j+\kappa_j}]_{j,k=1,\dots,N} \over \det [ x_k^{N-j}]_{j,k=1,\dots,N}}, \qquad \mathbf x = (x_1,\dots,x_N),
%\end{equation}
there is no such explicit formula for the Jack polynomials.
As a characterisation of the Jack polynomials (see e.g.~\cite[\S 12.6]{Fo10}), one first 
requires that when expanded in terms of the monomial symmetric polynomials $\{ m_\mu(\mathbf x) \}$ they have the triangular structure
\begin{equation}\label{4.0}
P_\kappa^{(\alpha)}(\mathbf x) = m_\kappa(\mathbf x) +
\sum_{\mu < \kappa} c_{\kappa, \mu}^{(\alpha)} 
m_\mu (\mathbf x),
\end{equation}
for certain $N$ independent coefficients $\{ c_{\kappa, \mu}^{(\alpha)} \}$. The notation $\mu < \kappa$ denotes the partial order on partitions defined by the requirement that $\sum_{i=1}^s \mu_i \le \sum_{i=1}^s \kappa_i$, for each $s=1,\dots,N$ and with equality for $s=N$. Then each 
$P_\kappa^{(\alpha)}(\mathbf x)$ is uniquely determined as the polynomial eigenfunction of the differential operator
\begin{equation}\label{4.1}
\sum_{j=1}^N x_j^2 {\partial^2 \over \partial x_j^2} +
{1 \over \alpha} \sum_{  \substack{i,j=1 \\j \ne i }}^N {1 \over x_i - x_j} \Big ( x_i^2 {\partial \over \partial x_i} -
x_j^2 {\partial \over \partial x_j}  \Big )
\end{equation}
having the structure (\ref{4.0}). 

We will have use for the orthogonality of the Jack polynomials with respect to a particular scalar product, $\langle \! \langle f,g\rangle \! \rangle^{(\alpha)}$ say, which can be traced back to Jack's original paper on the subject \cite{Ja70}.
This scalar product is defined in terms of the power sum basis by
\begin{equation}\label{JS1}
\langle\langle p_\kappa ,p_\mu \rangle\rangle^{(\alpha)} =  \alpha^{\ell(\kappa)} z_\kappa  \delta_{\kappa, \mu}, \quad
z_\kappa:= \prod_{j=1}^{\kappa_1} j^{f_j(\kappa)} f_j(\kappa)!.
\end{equation}
Here $\ell(\kappa)$ is the number of non-zero parts in the partition $\kappa$, $f_j(\kappa)$ is the number of times a part of $\kappa$ is equal to $j$, while $p_\kappa =  \prod_{j=1}^{\ell(\kappa)}
p_{\kappa_j}$. Fundamental to the theory of Jack polynomials is that they provide a further basis of symmetric polynomials which is also orthogonal with respect to this scalar product \cite{Ja70}, \cite{Ma95},
\cite[Eq.~(12.130)]{Fo10}
\begin{equation}\label{JS2}
\langle \! \langle P_\kappa^{(\alpha)} ,P_\mu^{(\alpha)} \rangle \! \rangle^{(\alpha)} = {h_\kappa' \over h_\kappa}\delta_{\kappa, \mu}.
\end{equation}
Here 
\begin{equation}\label{2.2h}
h_\kappa' = \prod_{s \in \kappa} ( \alpha (a(s) + 1) + l(s)), \quad {\rm and} \quad h_\kappa = \prod_{s \in \kappa} (\alpha a(s) + l(s)+1),
\end{equation}
where the quantities $a(s), l(s)$ are specified in the text about (\ref{h1}) below.

\subsection{Preliminaries relating to the $(q,t)$-generalisation}\label{S1.3}

As is usual in the $q$-calculus, introduce
\begin{equation}\label{5.1}
(x;q)_a = {(x;q)_\infty \over (x q^a; q)_\infty}, \qquad(u;q)_\infty = \prod_{j=0}^\infty (1 - u q^j).
\end{equation}
Also introduce the particular $q$-generalisation of $e^{-x}$,
\begin{equation}\label{5.2}
E_q(-x) := \sum_{n=0}^\infty {(-1)^n q^{n(n-1)/2} x^n \over
(q;q)_n} = (x;q)_\infty;
\end{equation}
for the evaluation of the sum to the product form, which requires
$|q| < 1$, see \cite{GR11}. It follows from the sum form in (\ref{5.2}) that
\begin{equation}\label{5.2a}
\lim_{q \to 1} E_q(-x (1 - q)) = e^{-x},
\end{equation}
thus justifying the stated status of $E_q(-x)$ in relation to $e^{-x}$.
Using (\ref{5.2a}) together with the product form in (\ref{5.2}) shows
\begin{equation}\label{5.2b}
\lim_{q \to 1^-} E_q(-x (1 - q)^{1/2}) 
 E_q(x (1 - q)^{1/2}) = \lim_{q \to 1^-} E_{q^2}(-x^2 (1 - q)) =
 e^{-x^2/2}.
 \end{equation}
 Thus, up to scaling in $x$, $E_q(-x) E_q(x)$ can be chosen as a $q$-generalisation of the Gaussian weight $e^{-x^2}$. Up to normalisation, this then is the case $a=-1$ of the weight (\ref{5.2c}).

From the viewpoint of $q$-orthogonal polynomial theory,
the most natural $q$-generalisation of the eigenvalue PDF for the GUE is the functional form proportional to
\begin{equation}\label{KE1}
\prod_{l=1}^N w_U^{(-1)}(x_l;q) \prod_{1 \le i < j \le N }
(x_i - x_j)^2,
 \end{equation}
 supported on $\{ \pm 1, \pm q, \pm q^2,\dots \}$;
 see e.g.~\cite{FLSY23}. This differs from the $a=-1$,
 $m=1$ case of (\ref{5.3}) which instead is proportional to
 \begin{equation}\label{KE2}
\prod_{l=1}^N w_U^{(-1)}(x_l;q) \prod_{1 \le i < j \le N }
(x_i - x_j)(x_i - qx_j).
 \end{equation}
 However, by a lemma due to Kadell \cite[Lemma 4]{Ka88} (see also \cite[Th. 7.2]{KS17}),
 for any $\phi(\mathbf x)$ symmetric
  \begin{equation}\label{KE3}
  \langle  \phi(\mathbf x) \rangle^{(-1,q,q)}_N =
   \langle  \phi(\mathbf x) \rangle^{ \rm qGUE}.
 \end{equation}  
 Hence a quantity such as the spectral moments is the same in both cases. However, to $q$-generalise the Vandermonde product factor in (\ref{1.1a}) with $\beta = 2m$ in a way that preserves integrability, it has been long recognised in the theory of the Selberg integral that the deformation present in (\ref{5.3}) must be utilised; see \cite{As80}. 
While this deformation is not symmetric, so extending (\ref{KE1}) and applying the lemma of  Kadell allows (\ref{5.3}) to be modified to the symmetric functional form proportional to
 \begin{equation}\label{5.3x}
  \prod_{l=1}^N  w_U^{(a)}(x_l;q) \Big (\prod_{1 \le i < j \le N} (x_i -  x_j) \Big )
  \prod_{p=-(m-1)}^{m -1}\prod_{1 \le i < j \le N} (x_i - q^p x_j),
  \end{equation}
  and the analogue of (\ref{KE3}) remains true. With $t = q^m$ this is said to be the symmetric $(q,t)$ generalisation of (\ref{1.1a}), companion to (\ref{5.3}).

%   In particular, the case $a=-1$, $m=1$ of (\ref{5.3}) is said to define the qGUE, this being the most natural $q$-generalisation of the eigenvalue PDF for the GUE from the viewpoint of orthogonal polynomial theory; see e.g.~\cite{FLSY23}. The references given at the beginning of this subsection in relation to the computation of moments all relate to this case. A conjecture for the generalisation of the superintegrability evaluation formula (\ref{4.2}) 
%   in the case $\beta = 2$ has been formulated in \cite[Eqns.~(16) and (20) with $t=q$]{MPS18},
%    \cite[Eq.~(2.9)]{MPS20} and
%    \cite[Eq.~(25) with $\xi_1 = - \xi_2 = 1$ and $t=q$]{MM22}
%   To state this, the Schur polynomial is to be regarded as written in the power sum basis, which is a homogeneous linear combination involving products of $p_k := \sum_{l=1}^N x_l^k$, and a substitution is made for each $p_k$ by writing $\{ p_k = c_k\}$ for some specified $c_k$. With this notation, the cited references give
%   \begin{equation}\label{3.7q}
%  \langle  s_\kappa(\mathbf x) \rangle^{\rm qGUE} 
%  =
% s_\kappa\Big ( \Big \{ p_k = {1 - q^{kN} \over 1 - q^k}
% \Big \} \Big )
% {  s_\kappa \Big ( \Big \{ p_k = {(1 + (-1)^k) \over 1 - q^k}
% \Big \} \Big ) \over s_\kappa \Big ( \Big \{ p_k = {1  \over 1 - q^k}
% \Big \} \Big )}.
% \end{equation}
% This after scaling by $(1-q)^{|\kappa|/2}$, which corresponds to the scaling $x_l \mapsto (1 - q)^{1/2} x_l$
% in the arguments of the homogeneous polynomial $s_\kappa$ in keeping with the scaling of (\ref{5.2b})),
% is seen to reduce to (\ref{3.7}) in the limit $q \to 1$.

With the conjectured superintegrability identity 
(\ref{Mq}) in the case $q=t$ as the starting point (then the Macdonald polynomials reduce to the Schur polynomials), and with use made of (\ref{3.5}), a study of the moments
$m_{N,2p}^{\rm qGUE}$ was initiated in \cite{MPS20}. This leads to a particularly simple formula for the corresponding generating function (in $N$); see \cite[Eq.~(3-6)]{MPS20}. It was further developed in \cite[Cor.~(4.5)]{Co21}, resulting in a particular $q$-Hahn polynomial formula for the moments when $N$ is chosen randomly according to a certain negative binomial distribution.

To specify the Macdonald polynomials beyond the case $q=t$,
following \cite{Ma95} introduce the difference operator
\begin{equation}\label{D1}
D_N^1(q,t) := \sum_{1 \le j,k \le N} {t x_j - x_k \over x_j - x_k} \tau_j,
\end{equation}
where
\begin{equation}
(\tau_j f) (x_1,\dots,x_N) = f(x_1,\dots, qx_j, \dots, x_N).
\end{equation}
The Macdonald polynomials $P_\kappa(\mathbf x;q,t)$ are the symmetric polynomial eigenfunctions of (\ref{D1}) with the same structure 
(\ref{4.0}) as the Jack polynomials.  Moreover, they reduce to the Jack polynomials in an appropriate $q \to 1$ limit,
\begin{equation}\label{D2}
\lim_{q \to 1} P_\kappa(\mathbf x;q,q^{\beta/2}) =
P_\kappa^{(2/\beta)}(\mathbf x).
\end{equation}
Using (\ref{D2}) shows that (\ref{4.2}) can be reclaimed  as a limiting case of (\ref{Mq}) with $a=-1$. With regards to this, note that it is immediate that the first factor on the right hand side in (\ref{Mq}) limits to the corresponding factor in (\ref{4.2}). For the second factor on the right hand side of (\ref{Mq}), which is a ratio power sum specialised Macdonald polynomials, note that as $t \to 1$ and with $a=-1$ the leading terms comes from the coefficient of $p^{|\kappa|/2}$ in the numerator, and the coefficient of $p_1^{|\kappa|}$ in the denominator, with the factors in the coefficients coming from the substitutions, $1/(1-t)^{|\kappa|}$, cancelling out. This gives the second factor on the right hand side of (\ref{4.2}).
% Another noteworthy point is that for $q=t$ the Macdonald polynomials are in fact independent of $q$ and are given by the Schur polynomials. In terms of the Macdonald polynomials, the superintegrability conjectures of 
% \cite{MPS18},
%    \cite{MPS20} and \cite{MM22} (all after minor modification to match our setting) read
%  Upon scaling by $(1-q)^{|\kappa|/2}$, the limit $q \to 1$ of (\ref{Mq}) reclaims (\ref{4.2}) with $\beta = 2m$.

Our subsequent working will require some further quantities and results from
 Macdonald polynomial theory.
 One is the notion of the diagram of the partition of $\kappa$, where each part $\kappa_i$ is drawn as the left justified $i$-th row consisting of $\kappa_i$ unit squares. For each square $s = (i,j)$ in the diagram, one defines the arm and leg lengths according to $a(s) = \kappa_i - j$ and $l(s) = \kappa_j' - i$ respectively. 
In terms of this notation, we introduce
\begin{equation}\label{h1}
h_\kappa(q,t) := \prod_{s \in \kappa}(1 - q^{a(s)}
t^{l(s) + 1}).
\end{equation}
The conjugate partition $\kappa'$ is defined in terms of the diagram of $\kappa$ by reversing the role of each row and column. We set 
\begin{equation}\label{n1}
n(\kappa) := \sum_{i=1}^N (i - 1) \kappa_i := \sum_{i=1}^{\kappa_1} \kappa_i'(\kappa_i'-1)/2, 
\end{equation}
and define the generalised $(q,t)$ Pochhammer symbol 
\begin{equation}\label{gp}
(a)_\kappa^{(q,t)} := 
\prod_{(i,j) \in \kappa} (t^{i-1} - aq^{j-1}) =
t^{n(\kappa) }\prod_{i=1}^N (a t^{1-i}; q)_{\kappa_i}. 
\end{equation}
We will also require the $(q,t)$  analogue of the power sum scalar product (\ref{JS1}),
\begin{equation}\label{JS1q}
\langle \! \langle p_\kappa ,p_\mu \rangle \! \rangle^{(q,t)} =   z_\kappa \prod_{l=1}^{\ell(\kappa)}    {1 - q^{\kappa_l} \over 1 - t^{\kappa_l}}  \delta_{\kappa, \mu}.
\end{equation}
The significance of this scalar product in Macdonald polynomial theory is the additional orthogonality \cite{Ma95}
\begin{equation}\label{JS2q}
\langle \! \langle P_\kappa(\mathbf x;q,t) ,P_\mu({\mathbf x;q,t}  \rangle \! \rangle^{(q,t)} = {h_\kappa'(q,t) \over h_\kappa(q,t)}\delta_{\kappa, \mu}.
\end{equation}
Here the definition of $h_\kappa(q,t)$ is as given in 
(\ref{h1}), while
\begin{equation}\label{h1dx}
h_\kappa'(q,t) := \prod_{s \in \kappa}(1 - q^{a(s)+1}
t^{l(s) }).
\end{equation}

\section{Proofs of the superintegrability identities}
\subsection{The Gaussian $\beta$ ensemble superintegrability identity}\label{S2.1}
Although, as already remarked, several proofs of (\ref{4.2}) are available in the literature, it is instructive to revise the particular proof strategy given in \cite{De09} as preparation for our proof of the $(q,t)$ generalised  superintegrability identity (\ref{Mq}). 

% We begin by considering the parameter dependent Hermitian Gaussian ensembles as introduced in \S \ref{S1.1}. Let $d^{\rm H} U$ denote the normalised Haar measure on real orthogonal $(\beta = 1)$, complex unitary $(\beta = 2)$ and unitary symplectic matrices $(\beta = 4)$. We then have that the eigenvalue probability density function for these ensembles is, up to proportionality, given by \cite[Eq.~(11.101), with the change of notation $(U^\dagger dU) \mapsto d^{\rm H} U$]{Fo10}
% \begin{equation}\label{6.1g}
% \prod_{1 \le j < k \le N} | x_k - x_j|^\beta
% e^{- \tilde{\beta} \sum_{j=1}^N x_j^2 - \tilde{\beta} t^2 \sum_{j=1}^N
% (x_j^{(0)})^2}
% \int e^{2 \tilde{\beta} t {\rm Tr} ( UG U^\dagger X_0)} \,
% d^{\rm H} U,
% \end{equation} 
% where $\tilde{\beta} := \beta/2 |1 - e^{-2 \tau}|$, $t:=e^{-\tau}$ and $\mathbf x^{(0)}$ are the eigenvalues of 
% $X_0$.

Required for this purpose is a certain generalised hypergeometric function of two sets of variables based on Jack polynomials. 
With $h_\kappa'$ as in (\ref{2.2h}) this is defined as
\begin{equation}\label{6.3g}
{\vphantom{\mathcal F}}_0^{\mathstrut}\mathcal F_0^{(\alpha)}(\mathbf x; \mathbf y) = \sum_{\kappa} {\alpha^{| \kappa |} \over h_\kappa'} 
{P_\kappa^{(\alpha)}(\mathbf x) P_\kappa^{(\alpha)}(\mathbf y) \over
P_\kappa^{(\alpha)}(\mathbf y) |_{\mathbf y = \mathbf 1}
}.
\end{equation}
In this case $\beta = 1$ this special function was first introduced in the field of mathematical statistics (see the text \cite[Th.~7.3.3 with $p=q=0$]{Mu82}) 
via their integral form
\begin{equation}\label{6.2g}
{\vphantom{\mathcal F}}_0^{\mathstrut} \mathcal F_0^{(2/\beta)}(\mathbf x; \mathbf x^{(0)})
=
\int e^{ {\rm Tr} ( UX U^\dagger  X_0)} \,
d^{\rm H} U  .
\end{equation}
Here $d^{\rm H} U$ denotes the normalised Haar measure on the orthogonal group, and $\mathbf x$ ($\mathbf x_0$) are the eigenvalues of the real symmetric matrix $X$ ($X_0$). This integral form remains true for $\beta =2$ (with $U$ now from the unitary group, and $X,X_0$ complex Hermitian matrices) and also for $\beta = 4$ (with $U$ now from the symplectic unitary group, and $X,X_0$ quaternion Hermitian matrices). It arises in the study of the random matrix theory associated with (\ref{2.1a})
\cite[\S 11.6.2]{Fo10}.

% Moreover, the fact that (\ref{6.1g}) satisfies the Fokker-Planck equation (\ref{2.1a}) allows for a meaning to be given to ${\vphantom{\mathcal F}}_0^{\mathstrut}\mathcal F_0^{(2/\beta)}$ for general $\beta > 0$. In fact a result of \cite{BF97a} identifies this quantity as a certain generalised hypergeometric function of two sets of variables based on Jack polynomials. 
% With $h_\kappa'$ as in (\ref{2.2h}) we have
% \begin{equation}\label{6.3g}
% {\vphantom{\mathcal F}}_0^{\mathstrut}\mathcal F_0^{(\alpha)}(\mathbf x; \mathbf y) = \sum_{\kappa} {\alpha^{| \kappa |} \over h_\kappa'} 
% {P_\kappa^{(\alpha)}(\mathbf x) P_\kappa^{(\alpha)}(\mathbf y) \over
% P_\kappa^{(\alpha)}(\mathbf y) |_{y_1 = \cdots = y_N = 1}
% }.
% \end{equation}

Introduce now a measure corresponding to a scaled version of the Gaussian $\beta$ ensemble probability density (\ref{1.1a}),
\begin{equation}\label{1.1H}
d \mu^{\rm G}(\mathbf x) := {1 \over \tilde{Z}_N^{\rm G}}
\prod_{1 \le i < j \le N} | x_i - x_j |^{2/\alpha}
\prod_{l=1}^N e^{-x_l^2/2} dx_1 \cdots dx_N,
\end{equation}
where $\tilde{Z}_N^{\rm G}$ is the normalisation, and we have set $\beta = 2/\alpha$. A key property of ${\vphantom{\mathcal F}}_0^{\mathstrut}\mathcal F_0^{(\alpha)}(\mathbf x; \mathbf y)$ is its value when integrated against (\ref{1.1H})
\cite[special case $z_1=\cdots=z_N = 0$ of Proposition 3.8]{BF97a},
\begin{equation}\label{MH}
\int_{\mathbb R^N} \, {\vphantom{\mathcal F}}_0^{\mathstrut}\mathcal F_0^{(\alpha)}(\mathbf x; \mathbf y) \, d \mu^{\rm G}(\mathbf x) =
e^{\sum_{j=1}^N y_j^2/2}.
\end{equation}
(A proof for $\beta = 1,2$ and 4 based on matrix integration using (\ref{6.2g}) requires no more than completing the square, whereas the proof for general $\alpha >0$ in \cite{BF97a} requires first developing a theory of multidimensional Hermite polynomials based on Jack polynomials relating to (\ref{2.0}).)
In 
the proof of (\ref{4.2}) given in \cite{De09}, this identity is one of the two key ingredients (it is further the case that the original proof  of \cite{Ok97} makes use of an equivalent integral identity, and that (\ref{MH}) is also used in the proof of \cite{Du03}). The other key ingredient is the orthogonalities (\ref{JS1}) and (\ref{JS2}).

% For $2/\alpha = 1,2$ and 4, 
% and with $X_0$ replaced by $Y$ for notational convenience, the random matrix interpretation gives
% \begin{multline}\label{MH}
% \int_{\mathbb R^N} \, {\vphantom{\mathcal F}}_0^{\mathstrut}\mathcal F_0^{(\alpha)}(\mathbf x; \mathbf y) \, d \mu^{\rm G}(\mathbf x) = \Big \langle 
% e^{- {\rm Tr} \, G^2/2 } e^{ {\rm Tr} ( UG U^\dagger  Y)}
% \Big \rangle_{G, U} \\
% = e^{{\rm Tr} \, Y^2/2} 
% \Big \langle 
% e^{- {\rm Tr} \, (G - UG U^\dagger  Y)^2/2}
% \Big \rangle_{G, U} = e^{{\rm Tr} \, Y^2/2} = e^{\sum_{j=1}^N y_j^2/2}.
% \end{multline}
% In the averages over $G$, the normalisation is taken so that $\langle e^{- {\rm Tr} \, G^2/2 } \rangle = 1$. To obtain the first equality on the second line, a completion of the square has been carried out, while to obtain the second equality on the second line, the translation invariance of the space of Hermitian matrices with Lebesgue measure has been used.

% A natural question is to ask about the validity of (\ref{MH}) for general $\alpha > 0$. By developing a theory of multidimensional Hermite polynomials based on Jack polynomials, these being a complete set of symmetric orthogonal polynomials with respect to the measure (\ref{1.1H}), it was shown in \cite[special case $z_1=\cdots=z_N = 0$ of Proposition 3.8]{BF97a} that (\ref{MH}) is indeed valid for  general $\alpha > 0$. 

One proceeds by substituting (\ref{6.3g}) in the left hand side of (\ref{MH}) in terms of power sums according to
$$
e^{\sum_{j=1}^N y_j^2/2} = \sum_{k=0}^\infty {1 \over 2^k k!}  (
p_2(\mathbf y)  )^k.
$$
Equating polynomials homogeneous of degree $2k$ on both sides gives
\begin{equation} \label{2.2i}
\alpha^{2k} \sum_{\kappa: |\kappa| = 2k} {1\over h_\kappa'} 
 {P_\kappa^{(\alpha)}(\mathbf y) \over
P_\kappa^{(\alpha)}(\mathbf y) |_{\mathbf y = \mathbf 1}}
\int_{\mathbb R^N} {P_\kappa^{(\alpha)}(\mathbf x) \,
d \mu^{\rm G}(\mathbf x}) 
= {1 \over 2^k k!}  (
p_2(\mathbf y)  )^k.
\end{equation}
Next, on both sides of (\ref{2.2i}), take the inner product with respect to the
particular Jack polynomial $P_\kappa^{(\alpha)}(\mathbf y)$ with $|\kappa| = 2k$, making use of the the orthogonality (\ref{JS2}) on the left hand side. One reads off that
\begin{equation}\label{JS3}
 {\alpha^{2k} \over h_\kappa} 
 {1 \over
P_\kappa^{(\alpha)}(\mathbf y) |_{\mathbf y = \mathbf 1}}
\int_{\mathbb R^N} {P_\kappa^{(\alpha)}(\mathbf x) \,
d \mu^{\rm G}(\mathbf x}) 
= {1 \over 2^k k!}\langle \! \langle P_\kappa^{(\alpha)} , p_2^k \rangle \! \rangle^{(\alpha)}.
\end{equation}
Noting that for $\kappa = (2)^k$, $\alpha^{\ell(\kappa)} z_\kappa = \alpha^k 2^k k!$, consideration of the orthogonality (\ref{JS1}) on the right hand side of gives the equivalent form
\begin{equation}\label{JS4}
 \alpha^{k} 
 \int_{\mathbb R^N} {P_\kappa^{(\alpha)}(\mathbf x) \,
d \mu^{\rm G}(\mathbf x}) 
=  h_\kappa P_\kappa^{(\alpha)}(\mathbf y) |_{\mathbf y = \mathbf 1}
[ p_2^k] P_\kappa^{(\alpha)}(\mathbf y).
\end{equation}
This is the superintegrability identity (\ref{4.2}), upon knowledge of the fact that
\begin{equation}\label{2.2k}
[ p_1^{2k}] P_\kappa^{(\alpha)}(\mathbf y) = {1 \over h_\kappa}
\end{equation}
(see e.g.~\cite[Prop.~12.6.7]{Fo10} for an equivalent statement), and a simple change of variables on the left hand side.

\begin{remark}\label{R3x}
    As emphasised in \cite[\S 8.4]{Du03} and \cite[\S 5.3]{MRW15}, there is an alternative evaluation of 
    (\ref{4.2})
    \cite[equivalent to the case $\mathbf z = \mathbf 0$ of Corollary 3.2]{BF97a},
    \begin{equation}\label{3.9a}
        (2/\beta)^{|\kappa|/2}
        \Big \langle P_\kappa^{(2/\beta)}(\mathbf x) \Big \rangle^{{\rm G}\beta{\rm E}_N} = P_\kappa^{\rm H}(\mathbf 0;2/\beta).
    \end{equation}
    Here $P_\kappa^{\rm H}(\mathbf z;\alpha)$ denotes the generalised Hermite polynomials based on Jack polynomials; see \cite[\S 13.3]{Fo10}. There is a recursive formula for the computation of the coefficients $\{ a_{\kappa,\mu} \}$ in the expansion
    $$
    P_\kappa^{\rm H}(\mathbf z;\alpha) = 
    P_\kappa^{(\alpha)}(\mathbf z) +
    \sum_{\mu \subset\kappa}
    a_{\kappa,\mu}
    P_\mu^{(\alpha)}(\mathbf z)
    $$
    (the notation $\mu \subset\kappa$ indicates that the diagram of $\mu$ must be strictly contained in the diagram of $\kappa$),
    which moreover has been implemented in software \cite{DES07}. In light of the explicit formula (\ref{4.3}), and noting that $P_\kappa^{\rm H}(\mathbf 0;\alpha)=a_{\kappa,\mathbf 0}$, this can then be viewed as a computation scheme for the evaluation of $[p_2^{|\kappa/2|}] P_\kappa^{(2/\beta)}(\mathbf x)$.
\end{remark}

\subsection{The $(q,t)$ generalised Gaussian ensemble superintegrability identity --- proof of Theorem \ref{T1}}\label{S3.2s}
Here we will show that the proof of (\ref{4.2}) from \cite{De09}, as revised in the previous subsection, can be generalised to provide a proof of (\ref{Mq}). Following \cite[Eq.~(1.1)]{BF98} as a $(q,t)$-analogue of the generalised hypergeometric function of two sets of variables (\ref{6.3g}) we introduce
\begin{equation}\label{8.1}
{}_0 \mathcal F_0(\mathbf x, \mathbf y;q,t) :=
\sum_\kappa {t^{n(\kappa)} \over h_\kappa'(q,t) 
P_\kappa(1,t,\dots,t^{N-1};q,t)} P_\kappa(\mathbf x;q,t)
P_\kappa(\mathbf y;q,t).
\end{equation}
Here the quantity $n(\kappa)$ is as defined below (\ref{h1}), while $h_\kappa'(q,t)$ is given by (\ref{h1dx}).
One notes too that
\begin{equation}\label{8.2}
P_\kappa(1,t,\dots,t^{N-1};q,t) =
P_\kappa\Big ( \Big \{ p_k = {1 - t^{kN} \over 1 - t^k}
\Big \}; q,t \Big ).
\end{equation}

It has already been remarked that the Al-Salam and Carlitz polynomials are the $q$-orthogonal polynomials corresponding to the weight (\ref{5.2c}). The theme of the work \cite{BF98} was to develop a theory of multivariable Al-Salam and Carlitz $q$-orthogonal polynomials corresponding to the multivariable weight (\ref{5.3}). In the course of this study, an integration formula providing a $(q,t)$ generalisation of (\ref{MH}) was obtained. This reads
\cite[Prop.~4.8 with $\mathbf z = \mathbf 0$]{BF98}
\begin{equation}\label{8.3}
 \Big \langle {}_0 \mathcal F_0(\mathbf x, \mathbf y;q,t) \Big \rangle^{(a,q,t)}_N =
\prod_{l=1}^N {1 \over E_q(-y_l) E_q(-ay_l)}.
\end{equation}

We know from the proof of (\ref{4.2}) revised in the previous subsection that the identity (\ref{MH}) is one of two key ingredients in its proof. This is seen to be appropriately generalised by way of (\ref{8.3}). The
second key ingredient, generalising from the Jack to the Macdonald case, relates to the scalar product (\ref{JS1q}).
To make use of this scalar product, we require the power sum expansion of the right hand side of (\ref{8.3}).
For this the product form of (\ref{5.2}) is substituted, and the exponential of the logarithm is taken. Expanding the logarithm as a power series shows
\begin{multline*}
- \log \prod_{l=1}^N E_q(-y_l) E_q(-ay_l)  =
\sum_{l=1}^N \sum_{j=0}^\infty \sum_{r=1}^\infty \bigg (
{y_l^r q^{jr} \over r} + {y_l^r a^r q^{jr} \over r} \bigg )
\\
= \sum_{j=0}^\infty \sum_{r=1}^\infty \bigg (
{p_r q^{jr} \over r} + {p_r a^r q^{jr} \over r} \bigg ) 
 = \sum_{r=1}^\infty \bigg (
{p_r  \over (1 - q^r) r} + {p_r a^r  \over (1 -  q^r) r} \bigg ),
\end{multline*}
where $p_r$ denotes the $r$-th power sum in the variables $y_1,\dots,y_N$.
Hence
\begin{equation}\label{8.4}
\prod_{l=1}^N {1 \over E_q(-y_l) E_q(-ay_l)} =
\prod_{r=1}^\infty \exp \bigg ( {(1 + a^r) p_r  \over (1 - q^r) r} \bigg ).
\end{equation}

We substitute (\ref{8.1}) in the left hand side of (\ref{8.3}), and (\ref{8.4}) on the right hand side. This allows for the terms that are homogeneous polynomials in $\{y_j\}$ of degree $k$ on the left hand side, and on the right hand side, to be read off. Hence, as a $(q,t)$ generalisation of (\ref{2.2i}) we deduce
\begin{multline}\label{2.2iq}
 \sum_{\kappa: |\kappa| = k} {t^{n(\kappa)} \over h_\kappa'(q,t) P_\kappa(1,t,\dots,t^{N-1};q,t)} 
\Big \langle P_\kappa(\mathbf x;q,t) \Big \rangle^{(a,q,t)}_N P_\kappa(\mathbf y;q,t) \\
= \sum_{\substack{f_1,\dots,f_k \ge 0 \\ \sum_{j=1}^{\kappa_1} j f_j = k}} \prod_{r=1}^k {(1 + a^r)^{f_r} \over r^{f_r} f_r! (1 - q^r)^{f_r}} p_{\kappa_{\{ f_j \}}}(\mathbf y).
\end{multline}
Here, on the right hand side, $f_j$ have the same meaning as in (\ref{JS1}), and the notation
$\kappa_{\{ f_j \}}$ indicates that the partition is to be thought of as defined by these variables. From here we use the scalar product orthogonality (\ref{JS2q}) 
to read off from this that for a particular partition $\kappa$ with $| \kappa | = k$
\begin{equation}\label{2.2jq}
 {t^{n(\kappa)} \over  h_\kappa(q,t)} 
 {\Big \langle P_\kappa(\mathbf x;q,t) \Big \rangle^{(a,q,t)}_N \over 
 P_\kappa(1,t,\dots,t^{N-1};q,t) }
= \sum_{\substack{f_1,\dots,f_k \ge 0 \\ \sum_{j=1}^{\kappa_1} j f_j = k}} \prod_{r=1}^k {(1 + a^r)^{f_r} \over r^{f_r} f_r! (1 - q^r)^{f_r}} \Big \langle \! \Big \langle
 P_\kappa(\mathbf y;q,t),p_{\kappa_{\{ f_j \}}}(\mathbf y)
\Big \rangle \! \Big \rangle^{(q,t)} .
\end{equation}
Use of (\ref{JS1q}) in this expression identifies the right hand side as equal to
\begin{equation}\label{2.24}
P_\kappa \Big ( \Big \{ p_k = {1 + a^k \over 1 - t^k}
\Big \};q,t \Big ).
\end{equation}
Also, it follows from the general property of
Macdonald polynomials
 \cite[Eq.~(5.3)]{Ma95}
\begin{equation}\label{Me}
P_\kappa\Big ( \Big \{ p_k = {1 - u^k \over 1 - t^k}
\Big \}; q,t \Big ) = {(u)_\kappa^{(q,t)} \over h_\kappa(q,t)},
\end{equation}
valid for general $u$,
that
\begin{equation}\label{3.17}
P_\kappa \Big ( \Big \{ p_k = {1 \over 1 - t^k}
\Big \}; q,t \Big ) =
 {t^{n(\kappa)} \over  h_\kappa(q,t)}. 
\end{equation}
Use of this together with (\ref{8.2}) shows that (\ref{2.2jq}) with the substitution of (\ref{2.24}) for the right hand side is precisely the superintegrability identity (\ref{Mq}). \hfill $\square$

\begin{remark}\label{R3.1}\label{RE1} $ $
\\ 1.
One observes that the requirement $t = q^m$ for $m$ a positive integer, as assumed in the PDF (\ref{5.3}) and so allowing for the definition of $\langle \cdot \rangle^{(a,q,t)}_N$, plays no role in the above proof. How to define  $\langle \cdot \rangle^{(a,q,t)}_N$ as a multidimensional Jackson integral beyond the case $t = q^m$
is known; see
\cite{St97,St00} and \cite[\S 5]{BF98}. But while from (\ref{5.3}) $\langle \cdot \rangle^{(a,q,t)}_N$ 
is a single multidimensional Jackson integral,
the more general case requires introducing a sum of $N+1$ weighted multidimensional Jackson integrals; see \cite[Eq.~(5.2) with $a=b=0$, $c=1$ and $d=-a$]{St97}.
More on this class of weighted multidimensional Jackson integrals can be found in \cite{IF14,IF17}. \\
2. In the case $m=1$ and thus $q=t$, when the Macdonald polynomials reduce to the Schur polynomials, the average in (\ref{Mq}) for $a=-1$ has been evaluated in an explicit product form in \cite[Prop.~2.3]{FLSY23}. This generalises the known explicit product form  for $\langle s_\kappa(\mathbf x) \rangle^{\rm GUE}$ \cite{DI93}.
\end{remark}

\section{Application to dualities and moments}\label{S4}
The results of Corollaries \ref{C1q} and  \ref{P5} both have analogues for the Gaussian $\beta$ ensemble. In our presentation these results will be stated and proved.
As with the material of \S \ref{S2.1}, although the results and their proofs are already available in the literature,
knowledge of their derivation is necessary to guide the pathway to their $(q,t)$ analogues.
Moreover, where needed, we give modifications and refinements of the original proofs to best allow for this.

\subsection{Duality formula for the G$\beta$E average with respect to Jack polynomials}\label{S3.1}

In the Jack case, a consequence of the superintegrability identity (\ref{4.2}) has been shown to be a duality formula for the G$\beta$E average of a Jack polynomial.

\begin{proposition}\label{C1}
 Let  $\alpha = 2/\beta$, and denote by $\ell(\kappa)$ the number of nonzero parts in $\kappa$.
For $\ell(\kappa) \le N$ and $\ell(\kappa') \le M$ we have
\begin{equation}\label{GF}
\Bigg \langle {P_\kappa^{(\alpha)}(\mathbf x) \over 
P_\kappa^{(\alpha)}(\mathbf x) |_{\mathbf x = \mathbf 1}} 
\Bigg \rangle^{{\rm G}\beta {\rm E}_N} 
=
(-\alpha)^{|\kappa|/2}
\Bigg \langle {P_{\kappa'}^{(1/\alpha)}(\mathbf x) \over 
P_{\kappa'}^{(1/\alpha)}(\mathbf x) |_{\mathbf x = \mathbf 1}} 
\Bigg \rangle^{{\rm G}(4/\beta) {\rm E}_M}. 
\end{equation}  
(Note in particular that under the stated conditions there is no dependence on $N$ or $M$ in this identity.)
\end{proposition}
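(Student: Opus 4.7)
The plan is to substitute the superintegrability identity (\ref{4.2}) into both sides of (\ref{GF}), thereby reducing the statement to a purely algebraic identity among coefficients of $p_1^{|\kappa|}$ and $p_2^{|\kappa|/2}$ in the power sum expansions of $P_\kappa^{(\alpha)}$ and $P_{\kappa'}^{(1/\alpha)}$. Concretely, dividing (\ref{4.2}) through by $P_\kappa^{(\alpha)}(\mathbf x)|_{\mathbf x = \mathbf 1}$ gives
\begin{equation*}
\Bigg\langle \frac{P_\kappa^{(\alpha)}(\mathbf x)}{P_\kappa^{(\alpha)}(\mathbf x)|_{\mathbf x = \mathbf 1}} \Bigg\rangle^{{\rm G}\beta{\rm E}_N} = 2^{-|\kappa|/2}\,\frac{[p_2^{|\kappa|/2}]\,P_\kappa^{(\alpha)}(\mathbf x)}{[p_1^{|\kappa|}]\,P_\kappa^{(\alpha)}(\mathbf x)},
\end{equation*}
which is manifestly independent of $N$ whenever $\ell(\kappa) \le N$, matching the $N,M$-independence noted in the statement. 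Performing the same reduction on the right hand side of (\ref{GF}) under the substitutions $\alpha \mapsto 1/\alpha$ and $\kappa \mapsto \kappa'$, and using $|\kappa'|=|\kappa|$, turns the claim into the symmetric-function identity
\begin{equation*}
\frac{[p_2^{|\kappa|/2}]\,P_\kappa^{(\alpha)}}{[p_1^{|\kappa|}]\,P_\kappa^{(\alpha)}} = (-\alpha)^{|\kappa|/2}\,\frac{[p_2^{|\kappa|/2}]\,P_{\kappa'}^{(1/\alpha)}}{[p_1^{|\kappa|}]\,P_{\kappa'}^{(1/\alpha)}}.
\end{equation*}

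To establish this I would invoke the Jack specialisation of Macdonald's duality involution: the algebra endomorphism $\omega_\alpha$ on symmetric functions determined on the power sum generators by $\omega_\alpha(p_r) = (-1)^{r-1}\alpha\,p_r$ is known, as the $t = q^{1/\alpha}$, $q\to 1$ limit of Macdonald's $\omega_{q,t}P_\lambda(q,t)=Q_{\lambda'}(t,q)$, to satisfy $\omega_\alpha P_\kappa^{(\alpha)} = c_\kappa\,P_{\kappa'}^{(1/\alpha)}$ for a scalar $c_\kappa$ expressible in terms of the quantities (\ref{2.2h}). Reading off coefficients of $p_\mu$ on both sides and using $\omega_\alpha p_\mu = \prod_j ((-1)^{j-1}\alpha)^{f_j(\mu)} p_\mu$, one obtains
\begin{equation*}
[p_1^{|\kappa|}]\,P_{\kappa'}^{(1/\alpha)} = c_\kappa^{-1}\,\alpha^{|\kappa|}\,[p_1^{|\kappa|}]\,P_\kappa^{(\alpha)},\qquad [p_2^{|\kappa|/2}]\,P_{\kappa'}^{(1/\alpha)} = c_\kappa^{-1}\,(-\alpha)^{|\kappa|/2}\,[p_2^{|\kappa|/2}]\,P_\kappa^{(\alpha)}.
\end{equation*}
Taking the ratio eliminates $c_\kappa$, and multiplying by the $(-\alpha)^{|\kappa|/2}$ prefactor appearing in the target identity produces the total factor $(-\alpha)^{|\kappa|}/\alpha^{|\kappa|}=(-1)^{|\kappa|}=1$, since $|\kappa|$ must be even for the averages not to vanish. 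The displayed identity follows.

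The main obstacle is not computational but one of sourcing the correct external input: one must locate and verify the precise Jack form of the involutive duality $\omega_\alpha P_\kappa^{(\alpha)}\propto P_{\kappa'}^{(1/\alpha)}$, which is the $q\to 1$ specialisation of \S VI.5 of Macdonald's book. Once this ingredient is in hand the rest is essentially bookkeeping. This reduction template is also precisely what one will want to $(q,t)$-generalise for the proof of Corollary \ref{C1q}: the involution $\omega_\alpha$ is replaced by $\omega_{q,t}$, the power sum weights $(-1)^{r-1}\alpha$ are replaced by $(-1)^{r-1}(1-q^r)/(1-t^r)$, and the superintegrability input (\ref{4.2}) is replaced by Theorem \ref{T1}.
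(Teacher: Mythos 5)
Your proposal is correct and follows essentially the same route as the paper: reduce the duality via the superintegrability identity (\ref{4.2}) to the coefficient-ratio identity (the paper's (\ref{SM1})), and then derive that identity from the Stanley--Macdonald duality automorphism $\omega$ sending $P_\kappa^{(\alpha)}$ to a scalar multiple of $P_{\kappa'}^{(1/\alpha)}$, with the scalar cancelling in the ratio. The only cosmetic difference is your normalisation of the automorphism ($\omega_\alpha p_r = (-1)^{r-1}\alpha\, p_r$ versus the paper's $\omega_{-\alpha} p_r = -\alpha\, p_r$), which yields the same relation between the $p_1^{|\kappa|}$ and $p_2^{|\kappa|/2}$ coefficients.
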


This was first derived by Dumitriu \cite[Th.~8.5.3]{Du03} (and further developed by Dumitriu and Edelman in \cite[Lemma 2.6]{DE05}), and independently by Desrosiers \cite[Prop.~4]{De09}.
As the proofs of \cite{Du03} and \cite{De09} use (\ref{4.2}) with the substitution (\ref{4.3}), it turns out that they have extra complexity relative to using (\ref{4.2}) directly,
as can be seen by comparing with the derivation we will now give.

{\it Proof of Proposition \ref{C1}}. 
The essential property of Jack polynomials that must be used in conjunction with (\ref{4.2}) is 
 an automorphism on power sums $\omega_c p_\kappa = c^{\ell(\kappa)} p_\kappa$, present in the work of  Stanley \cite{St89} and Macdonald \cite{Ma95}.
 According to these works, for the choice $c= - \alpha$ this has the action on Jack polynomials
\begin{equation}\label{SM}
\omega_{-\alpha} P_\kappa^{(\alpha)}(\mathbf x) = (-1)^{|\kappa|}{h_\kappa' \over h_\kappa} 
 P_{\kappa'}^{(1/\alpha)}(\mathbf x).
\end{equation}
In the statement of \cite{St89} and \cite{Ma95} of this result the Jack polynomials are normalised differently. 
For (\ref{SM}) in the present normalisation see \cite[Prop.~12.8.2]{Fo10}. Actually, for our application the details of the prefactors on the RHS of (\ref{SM}) play no role. All that matters is the proportionality, which implies the result
\begin{equation}\label{SM1}
{ [p_2^{|\kappa|/2}] P_\kappa^{(\alpha)}(\mathbf x) \over [p_1^{|\kappa|}] P_\kappa^{(\alpha)}(\mathbf x )} =
(- \alpha)^{|\kappa|/2} 
{ [p_2^{|\kappa|/2}] P_{\kappa'}^{(1/\alpha)}(\mathbf x) \over [p_1^{|\kappa|}] P_{\kappa'}^{(1/\alpha)}(\mathbf x )}.
\end{equation}
We note that both sides are independent of the number of arguments in $\mathbf x$ (up to the proviso that this number is not less than $\ell(\kappa)$ on the left hand side, and not less than $\ell(\kappa')= \kappa_1$ on the right hand side).
Thus, substituting for each side  the appropriate Gaussian $\beta$ ensemble average as implied by (\ref{4.2}), the duality (\ref{GF}) results.
\hfill $\square$

\subsection{The $(q,t)$ generalised duality formula --- proof of Corollary \ref{C1q} and an application}\label{S3.1b}
Following the strategy of the proof of Proposition \ref{C1} given above, we first introduce the $(q,t)$ generalisation of the power sum automorphism $\omega_{-\alpha}$. This is the so-called Macdonald automorphism \cite{Ma95}
\begin{equation}\label{MA}
\omega_{q,t} p_\kappa = (-1)^{|\kappa| - \ell(\kappa)}
\bigg ( \prod_{j=1}^{\ell(\kappa)} {1 - q^j \over 1 - t^j}
\bigg ) p_\kappa. 
\end{equation}
Its action on the Macdonald polynomials is given by \cite{Ma95}
\begin{equation}\label{MA1}
\omega_{q,t} P_\kappa(\mathbf x;q,t) : =
(-1)^{|\kappa| }
P_\kappa \bigg (
\bigg \{ p_k \mapsto  - {1 - q^k \over 1 - t^k} p_k \bigg \};q,t \bigg )
= {h_{\kappa}'(q,t)\over h_{\kappa}(q,t)}  P_{\kappa'}(\mathbf \{ p_k \};t,q).
\end{equation}

By substituting $p_k = -(1+a^k)/(1-q^k)$, then separately
$p_k = -1/(1-q^k)$, it follows from this that
\begin{equation}\label{MA3}
{  P_\kappa \Big ( \Big \{ p_k = {1 + a^k  \over 1 - t^k}
\Big \};q,t \Big ) \over P_\kappa \Big ( \Big \{ p_k = {1 \over 1 - t^k}
\Big \}; q,t \Big )} =
{  P_{\kappa'} \Big ( \Big \{ p_k = -{1 + a^k  \over 1 - q^k}
\Big \};t,q \Big ) \over P_{\kappa'} \Big ( \Big \{ p_k = -{1 \over 1 - q^k}
\Big \}; t,q \Big )} = 
{  P_{\kappa'} \Big ( \Big \{ p_k = {1 + a^k  \over 1 - q^{-k}}
\Big \};t^{-1},q^{-1} \Big ) \over P_{\kappa'} \Big ( \Big \{ p_k = {1 \over 1 - q^{-k}}
\Big \}; t^{-1},q^{-1} \Big )}.
\end{equation}
To obtain the second equality, the general facts that 
\begin{equation}\label{4.6a}
P_\mu(\mathbf x;q,t) = P_\mu(\mathbf x;q^{-1},t^{-1}), \qquad P_\mu(\{c^k p_k\};q,t) = c^{|\mu|} P_\mu(\{ p_k\};q,t)
\end{equation}
have been used.

To make use of (\ref{MA3}), divide both sides of (\ref{Mq}) by the first factor on its right hand side. 
We then recognise the new right hand side as the left hand side of (\ref{MA3}). Now do the same in the case of (\ref{Mq}) with $\kappa$ replaced by $\kappa'$, $(q,t) \mapsto (t^{-1}, q^{-1})$, and the number of variables on the left hand side changed from $N$ to $M$, with $M$ greater than or equal to the number of parts in $\kappa'$. The new right hand side is the equal to the right hand side in (\ref{MA3}).
Making use too of (\ref{8.2}) 
we arrive at (\ref{GFq}). 
\hfill $\square$

An application of Corollary \ref{C1q} is to evaluate
$\Big \langle \prod_{l=1}^N (u - x_l) \Big\rangle^{(a,q,t)}_N$, which in a random matrix context corresponds to the averaged characteristic polynomial.

\begin{proposition}
We have
  \begin{equation}\label{4.6c}  
\bigg \langle \prod_{l=1}^N (u - x_l) \bigg \rangle^{(a,q,t)}_N = U_N^{(a)}(u;t), 
\end{equation}
where ${U}_N^{(a)}(u;q)$ denotes the $N$-th (monic) Al-Salam and Carlitz orthogonal polynomial as corresponds to the weight (\ref{5.2c}).
\end{proposition}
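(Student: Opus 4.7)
The plan is to expand the characteristic polynomial in elementary symmetric polynomials, convert each $\langle e_k(\mathbf x)\rangle$ into a single-variable moment by Corollary \ref{C1q}, extract those moments from the $N=1$ case of the integration formula (\ref{8.3}), and then recognise the resulting double sum as the standard $q$-binomial expansion of $U_N^{(a)}(u;t)$.

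First I would note that $(1^k)$ is the minimum partition of size $k$ in the dominance order, so $P_{(1^k)}(\mathbf x;q,t) = e_k(\mathbf x)$, and hence
\begin{equation*}
\prod_{l=1}^N(u-x_l) = \sum_{k=0}^N (-1)^k u^{N-k}\,P_{(1^k)}(\mathbf x;q,t).
\end{equation*}
Applying Corollary \ref{C1q} to $P_{(1^k)}$ with $\kappa'=(k)$ and $M=1$, and using that in a single variable $P_{(k)}(x_1;t^{-1},q^{-1})=x_1^k$ and $P_{(k)}(1;t^{-1},q^{-1})=1$, yields
\begin{equation*}
\langle e_k(\mathbf x)\rangle^{(a,q,t)}_N = e_k(1,t,\dots,t^{N-1})\cdot M_k(t^{-1}) = t^{\binom{k}{2}}\binom{N}{k}_{\!t}\,M_k(t^{-1}),
\end{equation*}
where $M_k(q):=\langle x^k\rangle^{(a,q,t)}_1$ is, as a moment of the weight $w_U^{(a)}(\cdot;q)$, independent of $t$ (the product of differences in (\ref{5.3}) is empty when $N=1$).

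To evaluate $M_k(q)$, I would specialise (\ref{8.3}) to $N=1$, where the $(q,t)$ generalised hypergeometric function ${}_0\mathcal F_0(x,y;q,t)$ collapses to $1/(xy;q)_\infty$ and the right-hand side is $1/[(y;q)_\infty(ay;q)_\infty]$. Expanding both $q$-exponentials via $1/(z;q)_\infty=\sum_m z^m/(q;q)_m$ and matching the coefficients of $y^k$ yields the moment formula
\begin{equation*}
M_k(q) = \sum_{n=0}^k a^n \binom{k}{n}_{\!q}.
\end{equation*}

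It remains to show that substituting this back into $\sum_k(-1)^k u^{N-k}t^{\binom{k}{2}}\binom{N}{k}_t M_k(t^{-1})$ produces $U_N^{(a)}(u;t)$. Using the inversion rule $\binom{k}{n}_{t^{-1}}=t^{-n(k-n)}\binom{k}{n}_t$ together with the swap identity $\binom{N}{n+m}_t\binom{n+m}{n}_t=\binom{N}{n}_t\binom{N-n}{m}_t$, I would switch the summation variables to $(n,m)$ with $k=n+m$. The elementary identity $\binom{n+m}{2}-nm=\binom{n}{2}+\binom{m}{2}$ then factorises the $t$-weights cleanly into $n$ and $m$ pieces, and the $q$-binomial theorem identifies the inner sum over $m$ with $\prod_{j=0}^{N-n-1}(u-t^j)$. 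The outer sum becomes
\begin{equation*}
\sum_{n=0}^N(-a)^n\binom{N}{n}_{\!t}\,t^{\binom{n}{2}}\prod_{j=0}^{N-n-1}(u-t^j),
\end{equation*}
which is the standard $q$-binomial-theorem expansion of the monic Al-Salam and Carlitz polynomial $U_N^{(a)}(u;t)$. The main obstacle is this final combinatorial rearrangement, which requires careful bookkeeping of the $t$ versus $t^{-1}$ arising from the duality; it is otherwise routine once the three identities just mentioned are in hand.
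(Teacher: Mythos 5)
Your proposal is correct and follows essentially the same route as the paper: both specialise Corollary \ref{C1q} to $M=1$ with $\kappa=(1)^k$ so as to reduce $\langle\prod_{l=1}^N(u-x_l)\rangle^{(a,q,t)}_N$ to the one-variable Al-Salam--Carlitz moments $\sum_{n} a^n {k \brack n}_{t^{-1}}$, and then identify the resulting double sum with $U_N^{(a)}(u;t)$ by $q$-binomial manipulation. The only differences are cosmetic --- you expand directly in elementary symmetric functions and extract the moments from the $N=1$ case of (\ref{8.3}), where the paper instead invokes the dual Cauchy identity and cites the moment formula from the literature, and you terminate at the generating-function form $\sum_{n}(-a)^n t^{\binom{n}{2}}{N \brack n}_t\prod_{j=0}^{N-n-1}(u-t^j)$ of $U_N^{(a)}(u;t)$ rather than the explicit formula \cite[Eq.~(14.24.1)]{KLS10}.
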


\begin{proof}
To begin, we consider (\ref{GFq}) with $M=1$, which requires that $\kappa = (1)^k$ (i.e.~the part 1 repeated $k$ times). After cross multiplying the denominators, and multiplying through by $(-1/u)^k$, the sums on both sides can be evaluated using the dual Cauchy identity (see e.g.~\cite[Th.~6.5]{No23}).
Minor manipulation then shows
\begin{equation}\label{dd1}
\bigg \langle \prod_{l=1}^N (u - x_l) \bigg \rangle^{(a,q,t)}_N =
u^N 
\langle 
(x/u;t)_N
\rangle^{(a,q,t)}_1
 \Big |_{q \mapsto t^{-1}} =: I_N(u;t).
\end{equation}

Our task now is to evaluate the one-dimensional $q$-integral $I_N(u;t)$.
Writing the factor of the integrand $(x/u;t)_N$ as a series in $(x/u)$
according to the $q$-binomial expansion shows
\begin{align*}
I_N(u;t) & =  \sum_{k=0}^N (-1)^k t^{k(k-1)/2}u^{N-k}
{N \brack k}_t  
\langle 
x^k
\rangle^{(a,q,t)}_1
 \Big |_{q \mapsto t^{-1}}
 \\
& = \sum_{k=0}^N (-1)^k t^{k(k-1)/2}u^{N-k}
{N \brack k}_t \sum_{l=0}^k {k \brack l}_{t^{-1}} a^l, 
\end{align*}
where here the second line follows from knowledge of the moments of the  weight $ w_U^{(a)}(x;q)$ \cite[Eq.~(5.115)]{NS13}.
To simplify further, we extract from this the coefficient of $a^{N-l}$, to be denoted $[a^{N-l}] I_N(u;t)$, which requires replacing $\sum_{l=0}^k {k \brack l}_{t^{-1}} a^l$ by ${k \brack N-l}_{t^{-1}}$, and replacing $\sum_{k=0}^N$ by $\sum_{k=N-l}^N$. Now shifting $k \mapsto N - l + s$ ($s=0,\dots,l$) shows
\begin{align*}
[a^{N-l}] I_N(u;t) & = (-1)^{N-l} u^l \sum_{s=0}^l
t^{(N-l+s)(N-l+s-1)/2} (-u)^{-s}
{N \brack l-s}_t {N - l + s \brack s}_{t^{-1}} \\
 & = t^{N (N-1)/2} (-1)^{N} (t u)^l 
 {(t^{-N};t)_l \over (t;t)_l} \sum_{s=0}^l (-u)^{-s} t^{s(s-1)/2} {k \brack l}_t \\
  & = t^{N (N-1)/2} (-1)^{N} (tu)^l 
 {(t^{-N};t)_l \over (t;t)_l} (1/u;t)_l,
\end{align*}
where the second line follows by manipulating the products implied by the definition of the $q$-binomial coefficients, and the third line follows from the $q$-binomial summation. From the explicit formula for 
$U_N^{(a)}(u;q)$ \cite[Eq.~(14.24.1)]{KLS10}, this tells us that $ I_N(u;t) = U_N^{(a)}(u;t)$, which when substituted in (\ref{dd1}) gives the sought result
(\ref{4.6c}).
\end{proof}

\begin{remark}\label{R4.3} $ $ \\
1.~In the case $q=t$ (i.e.~the case $m=1$ of (\ref{5.3})), the equivalence between (\ref{KE2}) and (\ref{KE1}) (which holds independent of setting $a=-1$) together with the classical Heine identity from orthogonal polynomial/ random matrix theory \cite[Eq.~(5.2.9)]{EKR15} give that
\begin{equation}\label{U1}
\bigg \langle \prod_{l=1}^N (u - x_l) \bigg \rangle^{(a,q,q)}_N = {U}_N^{(a)}(u;q),
\end{equation}
as is consistent with (\ref{4.6c}).
\\
2.~Under fairly general conditions,
for a random matrix ensemble with a well defined global density $\rho^{\rm g}(x) \mathbbm 1_{x \in I}$, the large $N$ asymptotic formula
$$
\bigg \langle \prod_{l=1}^N (c_N u - x_l) \bigg \rangle \sim \exp\bigg (-N \int_I 
\log (u - x) \, \rho^{\rm g}(x) \, dx \bigg  ),
$$
is expected to hold. Here $c_N$ is the scaling factor corresponding to the global scaling (e.g.~for the GUE, $c_N = \sqrt{N}$ and $\rho^{\rm g}(x)$ is the Wigner semi-circle law (\ref{3.3})). Moreover, in cases that the averaged characteristic polynomial satisfies a differential equation,
it has been shown in \cite[\S 3]{FL15} how the asymptotic formula then leads to 
 a characterising equation for the resolvent
\begin{equation}\label{4.10a}
G(u) := \int_I {\rho^{\rm g}(x) \over u - x} \, dx.
\end{equation}
In \cite[Remark 3.7.2]{Fo22}, this strategy was successfully implemented in the case of the Stietljes-Wigert random matrix ensemble, for which the characteristic polynomial satisfies a $q$-difference equation. The latter circumstance is in common with (\ref{U1}). Taking $a=-1$, when the Al-Salam and Carlitz polynomial $U_N^{(a)}(u;q)$ reduces to the discrete $q$-Hermite polynomial, use of the $q$-difference equation satisfied by the latter \cite[Eq.~(14.28.5)]{KLS10}, 
together with the asymptotic formula (\ref{4.10a}) with $c_N=1$, and setting too $q=e^{-\lambda/N}$, one is led to the limiting equation
$$
\Big (2 \sinh {G(u) \over 2} \Big )^2 =
(1 - e^{-\lambda}) u^2.
$$
However, in contrast to the case of the  Stietljes-Wigert random matrix ensemble, this is not consistent with the known exact evaluation of $G(u)$ in this setting
obtained recently in \cite[Eq.~(4.29)]{BFO24}; one might speculate that the domain no longer being continuous but rather a $q$-lattice is a cause of this.
\\
3.~Not restricting to $M=1$ in 
(\ref{GFq}) and proceeding as in the derivation of (\ref{dd1}) gives the duality
  \begin{equation}\label{Z1}  
\bigg \langle \prod_{m=1}^M \prod_{l=1}^N (u - q^{-(m-1)}x_l) \bigg \rangle^{(a,q,t)}_N =
\bigg \langle \prod_{j=1}^N \prod_{k=1}^M (u - t^{j-1}x_k) \bigg \rangle^{(a,t^{-1},q^{-1})}_M.
\end{equation}
The analogue of this duality for the Gaussian $\beta$ ensemble \cite[Eq.~(13.162)]{Fo10}, in the case of $\beta$ even, makes possible asymptotic analysis of the corresponding eigenvalue density \cite{DF06a,FT19a}. However, in the present $(q,t)$ setting one can no longer identify the left hand side of (\ref{Z1}) in terms of the density, which itself should be considered with respect to the symmetrised PDF (\ref{5.3x}). Leaving this point aside, there is also the difficulty of the interpretation of the right hand side as a Jackson integral   (recall Remark \ref{R3.1}.1), which would represent an added complexity to possible asymptotic analysis.
\end{remark}

\subsection{G$\beta$E moments}\label{S4.3}
Extending the notation $\langle \cdot \rangle^{\rm GUE}$ used above (\ref{3.1}), write
\begin{equation}\label{mb}
m_{N,2p}^{{\rm G}\beta{\rm E}}:=\Big \langle \sum_{l=1}^N x_l^{2p} \Big \rangle^{{\rm G}\beta {\rm E}_N}.
\end{equation}
We would like to make use of the superintegrability formula (\ref{4.2}) for the computation of these moments. For this, knowledge of the Jack polynomial generalisation of (\ref{3.5})  is required. In terms of the quantity $h_\kappa'$ defined in (\ref{2.2h}), and with $(u)_j := u(u+1) \cdots (u+j-1)$ denoting the increasing Pochhammer symbol, this is known to be given by (see e.g.~\cite[Eq.~(12.145)]{Fo10})
\begin{equation}\label{3.5J}
\sum_{l=1}^N x_l^j =  \sum_{\kappa: |\kappa| = j} 
u_\kappa(\alpha)
 P_\kappa^{(\alpha)}(\mathbf x), \quad u_\kappa(\alpha) :=
|\kappa| \alpha^{|\kappa|} {(\kappa_1 - 1)! \over h_\kappa'} \prod_{s = 2}^{\ell(\kappa)} (-(s-1)/\alpha)_{\kappa_s}.
\end{equation}
Substituting in (\ref{mb}) then applying (\ref{4.2}) gives an explicit formula for $m_{2k}^{{\rm G}\beta{\rm E}}$, up to the need to perform a sum over all partitions of size $2k$, and the need to compute 
$[p_2^{|\kappa|/2}] P_\kappa^{(2/\beta)}(\{p_\kappa \})$ (for a listing of the Jack polynomials expanded in the power sum basis for partitions up to and including length three, see \cite[Eq.(2.42) with $\beta \mapsto \beta/2$]{Lo21}). These requirements  make the resulting formula impractical to implement as a hand calculation apart from small orders,
\begin{align}\label{3.6}
2 \, m_{N,2}^{{\rm G}\beta{\rm E}} & = N^2 + N(-1 + \alpha) \nonumber \\
2^{2} \, m_{N,4}^{{\rm G}\beta{\rm E}} & = 2N^3 + 5N^2(-1 + \alpha)+N (3 - 5 \alpha + 3 \alpha^2),
\end{align}
where $\alpha = 2/\beta$.
Higher order tabulations of these moments are known in the literature, derived from the so-called loop equation method \cite{BMS11,MMPS12,WF14}, and also the use of (\ref{3.9a}) \cite{WF14}, which together provide the explicit form of $m_{2k}$ up to and including $k=10$. More recently, La Croix \cite[\S 1.1]{LC20} has provided Maple code to implement a recursion from his thesis  \cite[Lemma 4.13]{LC09} (a statement of this recursion can also be found in \cite[Eq.~(1.9)]{Ag18}), which is demonstrated to be far more efficient still.

In another direction, the expression for the moments obtained from this formalism can be used to deduce some fundamental structural properties, first isolated by Dumitriu and Edelman \cite[Th.~2.8]{DE05}. These authors made use of a combination of theory relating to Jack polynomials, theory relating to  multidimensional Hermite polynomials orthogonal with respect to (\ref{1.1H}),  a tridiagonal matrix realisation of (\ref{1.1a}) as an eigenvalue probability density function \cite{DE02} as well as the duality (\ref{GF}).

The first of the structural findings of \cite{DE05} is the generalisation of (\ref{3.1}),
\begin{equation}\label{3.1g}
2^p m_{N,2p}^{{\rm G}\beta{\rm E}} =
\sum_{g=0}^{p}
\mathcal E_g(p,\alpha) N^{p+1-g},
\end{equation}
where $\mathcal E_g(p,\alpha)$ is a polynomial in $\alpha$ of degree at most $p$ and with integer coefficients. From the formula for $m_{N,2p}^{{\rm G}\beta{\rm E}}$ obtained from (\ref{3.5J}) and (\ref{4.2}) one sees that
the only term depending on $N$ is that in (\ref{4.3}), which by inspection is of degree $|\kappa| = 2p$. The cancellations which are responsible for the refined expansion in $N$ (\ref{3.1g}) cannot be anticipated from this form (the same would seem to apply for the formalism of 
\cite{DE05}). The argument in \cite{DE05} for  $\mathcal E_g(p,\alpha)$ being a polynomial in $\alpha$ of degree at most $p$ and with integer coefficients is based on the tridiagonal matrix realisation of (\ref{1.1a}). In the present formalism this property is not immediate as one is faced with a rational function in $\alpha$, since the factor $h_\kappa'$ in the denominator in (\ref{3.5J}) is itself a polynomial in $\alpha$.

The second structural finding of \cite{DE05} in relation to (\ref{3.5J}) is the functional equation 
\begin{equation}\label{fa}
m_{N,2p}^{{\rm G}\beta{\rm E}} = (-2/\beta)^{p+1} 
m_{-\beta N/2,2p}^{{\rm G}(4/\beta){\rm E}}.
\end{equation}
This is illustrated by the explicit results (\ref{3.6}), upon recognising
that replacing $\beta$ by $4/\beta$ is equivalent to replacing $\alpha$ by $1/\alpha$. 
Since the left hand side is a polynomial in $N$, it is well defined beyond positive integer values, which then gives meaning to the right hand side.
The analogue of (\ref{fa}) for the moments (\ref{mbq}) can be obtained.
It is instructive to first revise the proof of (\ref{fa}), which is based on (\ref{3.5J}) and 
the duality (\ref{GF}).

{\it Proof of (\ref{fa}). } Our working is a refinement of that given in \cite{DE05}. 
We begin by noting that the product over $s$ in the definition of $u_\kappa(\alpha)$ in (\ref{3.5J}) multiplied by the factor $(\kappa_1-1)!$, to be denoted $r_\kappa(\alpha)$ say, can be written in terms of the diagram of $\kappa$ according to
$$
r_\kappa(\alpha) = \prod_{s \in \kappa \backslash (1,1)}(-(\ell(s)-1)/\alpha + (a(s) - 1)).
$$
 After taking the transpose of the diagram, simple manipulation gives
$r_\kappa(\alpha) = (-1/\alpha)^{|\kappa|-1} 
r_{\kappa'}(1/\alpha)$. 
It is also true that $h_\kappa'(\alpha) = \alpha^{|\kappa|} h_{\kappa'}(1/\alpha)$, as follows from (\ref{2.2h}). These results together in the definition of $u_\kappa(\alpha)$ in (\ref{3.5J}) show
\begin{equation}\label{Pt2}
(- \alpha)^{|\kappa| - 1} {u_\kappa(\alpha) \over h_\kappa(\alpha)} = {u_{\kappa'}(1/\alpha) \over h_{\kappa'}(1/\alpha)}.
\end{equation}

To utilise (\ref{Pt2}) in the derivation of (\ref{fa}), rewrite  the first equation in (\ref{3.5J}) as
\begin{equation}\label{3.5L}
\sum_{l=1}^N x_l^j = \alpha^{j} \sum_{\kappa: |\kappa| = j} 
{ u_\kappa(\alpha) \over h_\kappa(\alpha)} [N/\alpha]_\kappa^{(\alpha)} \bigg ( 
 {P_\kappa^{(\alpha)}(\mathbf x) \over
 P_\kappa^{(\alpha)}(\mathbf x)|_{\mathbf x = \mathbf 1}} \bigg ).
 \end{equation}
 In obtaining this form, use has been made of (\ref{4.3}) and (\ref{2.2k}).
 Averaging over $\mathbf x$ with the G$\beta$E${}_N$ measure allows for the substitution (\ref{GF}) with  $M=N$. Substituting too according to (\ref{Pt2}), and using the identity $[N/\alpha]_\kappa^{(\alpha)} = (-\alpha)^{-|\kappa|} [-N]_{\kappa'}^{(1/\alpha)}$ (see \cite[Exercises 12.4, q.2]{Fo10}) now has each term in the summand a function of $\kappa'$. Renaming $\kappa'$ as $\kappa$ then shows
 \begin{equation}\label{5.5}
 m_{N,2p}^{{\rm G}\beta{\rm E}} = (-2/\beta)^{p+1} 
 \alpha^{-2p} \sum_{\kappa: |\kappa| = 2p} 
 { u_\kappa(1/\alpha) \over h_\kappa(1/\alpha)} [-N]_\kappa^{(1/\alpha)} \bigg \langle 
 {P_\kappa^{(1/\alpha)}(\mathbf x) \over
 P_\kappa^{(1/\alpha)}(\mathbf x)|_{\mathbf x = \mathbf 1}} \bigg \rangle^{{\rm G}(4/\beta){\rm E}_N}.
 \end{equation}
 Important here is the fact, as follows from the proof of Proposition \ref{C1}, that the average on the right hand side is independent of $N$. Using this latter point,
 we recognise the right hand side, apart from the factor of
 $(-2/\beta)^{p+1}$, as the averaged form of (\ref{3.5L}),
 but now with $\alpha$ replaced by $1/\alpha$ and $N$ replaced by $-N / \alpha$.
 The functional equation (\ref{fa}) then follows. \hfill $\square$

 \subsection{The $(q,t)$ generalised moments --- proof of Corollary \ref{P5}}\label{S4.4}
 The moments of interest are the average (\ref{mbq}).
The generalisation of (\ref{3.5J}) as it applies to
(\ref{mbq}) is known \cite{Ma95} (see too \cite[Eq.~(3.9)]{Ko96}),
 \begin{equation}\label{3.5M}
 \sum_{l=1}
^N x_l^j = \sum_{\kappa: |\kappa| = j}
u_\kappa(q,t) P_\kappa(\mathbf x;q,t), \qquad
u_\kappa(q,t) := (1 - q^{|\kappa|} ) {\chi_\kappa(q,t) \over h_\kappa'(q,t)},
\end{equation}
where $h_\kappa'(q,t)$ is given by (\ref{h1dx}) and
 \begin{equation}\label{chx}
\chi_\kappa(q,t) = \prod_{\substack{s=(i,j) \in \kappa
\\ s \ne (1,1)} }(t^{i-1} - q^{j-1}).
\end{equation}
Taking averages of both sides of (\ref{3.5M}), making use of (\ref{GFq}) on the right hand side, reveals the structure
 \begin{equation}\label{stm}
 \bigg \langle \sum_{l=1}^N x_l^j \bigg \rangle^{(a,q,t)}_N = \sum_{l=0}^j \alpha_l(a,q,t) 
 t^{Nl},
 \end{equation}
 where each $\alpha_l(a,q,t)$ is a rational function of
 $q,t$ and polynomial in $a$. In particular, this functional form justifies the notation $M_j(a,q,t,u)$ with $u = t^N$ as introduced in (\ref{mbq}).

{\it Proof of Corollary \ref{P5}.} 
From the definition (\ref{chx}) we have 
$\chi_{\kappa'}(t,q) = (-1)^{|\kappa| - 1} \chi_{\kappa'}(q,t)$, while from the definitions
(\ref{h1}) and
 (\ref{h1dx}) we have $h_\kappa'(q,t) = h_{\kappa'}(t,q)$. Using these relations in the definition of $u_\kappa(q,t)$ in (\ref{3.5M}) shows
 \begin{equation}\label{3.6M}
 {u_\kappa(q,t) \over h_\kappa(q,t)} = 
  (-1)^{|\kappa|-1}   \bigg ( {1 - q^{|\kappa|} \over 1 - t^{|\kappa|}} \bigg )
 {u_{\kappa'}(t,q) \over h_{\kappa'}(t,q)}.
 \end{equation}
  On the right hand side, by replacing 
 $(q,t) \mapsto (t^{-1}, q^{-1})$ in the first equation of (\ref{3.5M}) and using the first equation is (\ref{4.6a}) we can deduce $u_{\kappa'}(t,q) = u_{\kappa'}(t^{-1},q^{-1})$. Also, from the definitions (\ref{h1}) and (\ref{n1}) we read off that
 \begin{equation}\label{h1p}
 h_\kappa(q,t) = (-1)^{|\kappa|} q^{n(\kappa')}
 t^{n(\kappa) + |\kappa|} h_\kappa(q^{-1},t^{-1}).
 \end{equation}
 Using these facts allows the rewrite of (\ref{3.6M})
 \begin{equation}\label{3.6p} 
 {u_\kappa(q,t) \over h_\kappa(q,t)} = -t^{-n(\kappa)}
  q^{-n(\kappa')-|\kappa|}
     \bigg ( {1 - q^{|\kappa|} \over 1 - t^{|\kappa|}} \bigg )
 {u_{\kappa'}(t^{-1},q^{-1}) \over h_{\kappa'}(t^{-1},q^{-1})}.
  \end{equation}

 To proceed further,
 in analogy to (\ref{3.5L}) we now write (\ref{MA}) in the form
\begin{equation}\label{3.6L}
\sum_{l=1}^N x_l^j =  \sum_{\kappa: |\kappa| = j} 
t^{n(\kappa)} { u_\kappa(q,t) \over h_\kappa(q,t)} 
  {P_\kappa\Big ( \Big \{ p_k = {1 -t^{kN} \over 1 - t^k} \Big \};q,t \Big ) \over 
P_\kappa\Big ( \Big \{ p_k = {1 \over 1 - t^k} \Big \};q,t \Big )} 
{P_\kappa(\mathbf x;q,t) \over
 P_\kappa(1,t,\dots,t^{N-1};q,t)},
 \end{equation} 
 where use has been made of (\ref{8.2}) and (\ref{3.17}).
Next, we average over $\mathbf x$
according to $\langle \cdot \rangle^{(a,q,t)}_N$, making use of (\ref{GFq}) on the right hand side. Further, we substitute for the first ratio in the summand (\ref{3.6L}) according to (\ref{3.6p}), while for the second ratio in the summand we make use of (\ref{MA3}) with
$a^k = - t^{Nk}$. This shows
 \begin{multline}\label{3.7L}
\bigg \langle \sum_{l=1}^N x_l^j \bigg \rangle^{(a,q,t)}_N =  
-   q^{-j} \bigg ( {1 - q^j \over 1 - t^j} \bigg )\sum_{\kappa: |\kappa| = j} q^{n(\kappa')}
 {u_{\kappa'}(t^{-1},q^{-1}) \over h_{\kappa'}(t^{-1},q^{-1})}
{  P_{\kappa'} \Big ( \Big \{ p_k = {1 - t^{Nk}  \over 1 - q^{-k}}
\Big \};t^{-1},q^{-1} \Big ) \over P_{\kappa'} \Big ( \Big \{ p_k = {1 \over 1 - q^{-k}}
\Big \}; t^{-1},q^{-1} \Big )} \\ \times
\bigg \langle 
 {P_{\kappa'}(\mathbf x;t^{-1},q^{-1}) \over
 P_{\kappa'}(1,q^{-1},\dots,q^{-N+1};t^{-1},q^{-1})} \bigg \rangle^{(a,q,t)}_N.
 \end{multline} 
 In this we know that the average on the right hand side is independent of $N$ for $N \ge \ell(\kappa')$. Comparing with the average of (\ref{3.6L}) gives 
 (\ref{mbq1}). \hfill $\square$

 A listing of the Macdonald polynomials for partitions of size up to three, written in terms of the power sum basis, can be found in \cite[Page 6, after correction of the case with $\kappa = (1)^3$]{Sh15} and \cite[Page 50]{Lo21}. From such listings, and together with 
 (\ref{3.5M}) and Theorem \ref{T1}, we can compute for the first three of the moments defined by (\ref{mbq}),
 \begin{align}
 M_1(a,q,t,u) & = (1 + a)  {(1 - u) \over (1 - t)}, \\
 M_2(a,q,t,u) & =  {(1 - u) \over t(1 - t^2)}
 \bigg ( (1 + a^2)t + u\Big ( t + a (1 + q + (1+a+q)t )\Big ) \bigg ),  \label{4.30} \\
 \begin{split}
  M_3(a,q,t,u) & = (1 + a) {(1 - u) \over t^2(1 - t^3)}
  \bigg  (a (1+t)(1 + q + q^2) u^2 + 
  t^2 ((1+a^2)(1 + u + u^2 ) \\ & \qquad + a (-1 + q u) (1 + u + q u) ) \bigg ),    
 \end{split}
  \end{align}
  where $u:= t^N$. Each of these can  be checked to indeed obey the functional relation (\ref{mbq1}). 
  Dividing by $(1-t)$ and setting $a=-1$, we can check that taking the limit $q\to 1$ in (\ref{4.30}) reclaims the evaluation formula for $2 m_{N,2}^{{\rm G} \beta {\rm E}}$ in (\ref{3.6}).
  We remark too on the emergence of obvious structural patterns, the mechanism for which is not  immediate from our formalism (apart from the factor of $(1+a)$ for the odd moments which is required since the weight $w_U^{(a)}(x;q)$ is even for $a=-1$, implying that the odd moments then vanish).

\begin{remark}  
Suppose $t=q$, which is the case $m=1$ of (\ref{5.3}). Then (\ref{mbq1}) can be written
$$
q^{p/2}  M_p(a,q,q,u) = - \Big ( q^{p/2}  M_p(a,q,q,u) \Big ) \Big |_{q \mapsto 1/q}.
$$
Now setting $q = e^{-\lambda/N}$ and thus $u = e^{-\lambda}$, it follows from this that $q^{p/2}  M_p(a,q,q,u)$ permits an expansion in odd powers of $1/N$, generalising the topological expansion  (\ref{3.1}) of the GUE moments. With $a=-1$, this feature has been observed previously in \cite{FLSY23}, and moreover the explicit functional form of the first two terms of the expansion have been computed in \cite{BFO24}. 
\end{remark}

%\subsection*{Data availability statement} There is no data associated to this work.

%\subsection*{Conflict of interest statement} The authors have no conflicts of interest to disclose.

%\bibliographystyle{amsplain}
%\bibliography{book1t16}
\nopagebreak

\providecommand{\bysame}{\leavevmode\hbox to3em{\hrulefill}\thinspace}
\providecommand{\MR}{\relax\ifhmode\unskip\space\fi MR }
% \MRhref is called by the amsart/book/proc definition of \MR.
\providecommand{\MRhref}[2]{%
  \href{http://www.ams.org/mathscinet-getitem?mr=#1}{#2}
}
\providecommand{\hyperref}[2]{#2}

\end{document}